\newtheorem{lemma}{Lemma}
\newtheorem{corollary}{Corollary}
\newtheorem{proof}{Proof}
\newcounter{AnnahmenCounter}
\newcommand{\qed}{\hfill $\Box$\\}
\newcounter{AnnahmenCounterLocal}
\newcounter{AnnahmenCounterGlobal}
\newcounter{AnnahmenCounterA}
\newcounter{AnnahmenCounterB}
\newcounter{AnnahmenCounterP}
\newcounter{AnnahmenCounterZ}
\newcounter{AnnahmenCounterM}
\newcounter{AnnahmenCounterK}
\newcounter{AnnahmenCounterKa}
\newcommand{\AnnA}{A}
\newcommand{\AnnANumm}{(\AnnA\arabic{*})}
\newtheorem{theorem}{Theorem}
\begin{document}

\title{Left-truncated discrete lifespans: The AFiD enterprise panel}

\author{Eric Scholz \& Rafael Wei{\ss}bach \\[2mm] \textit{\footnotesize{Chair of Statistics and Econometrics,}}   \\[-2mm]
	\textit{\footnotesize{  Faculty for Economic and Social Sciences,}} \\[-2mm]
	\textit{\footnotesize{University of Rostock}} \\
}
\date{ }
\maketitle

\renewcommand{\baselinestretch}{1.3}\normalsize

\begin{abstract}
Our model for the lifespan of an enterprise is the geometric distribution. We do not formulate a model for enterprise foundation, but assume that foundations and lifespans are independent. We aim to fit the model to information about foundation and closure of German enterprises in the AFiD panel. The lifespan for an enterprise that has been founded before the first wave of the panel is either left truncated, when the enterprise is contained in the panel, or missing, when it already closed down before the first wave. Marginalizing the likelihood to that part of the enterprise history after the first wave contributes to the aim of a closed-form estimate and standard error. Invariance under the foundation distribution is achived by conditioning on observability of the enterprises. The conditional marginal likelihood can be written as a function of a martingale. The later arises when calculating the compensator, with respect some filtration, of a process that counts the closures. The estimator itself can then also be written as a martingale transform and consistency as well as asymptotic normality are easily proven. The life expectancy of German enterprises, estimated from the demographic information about 1.4 million enterprises for the years 2018 and 2019, are ten years. The width of the confidence interval are two months. Closure after the last wave is taken into account as right censored.
 \\[2mm]
\noindent \textit{Keywords:} Left truncation, business demography, sample selection, conditional likelihood, martingale limit theorem
\end{abstract}

\section{Introduction} \label{introduc}

Our aim is to estimate the probability for an enterprise to close down - by insolvency or any other reason - over the course of the next year. For each of several enterprises we will possess information over several successive years. The natural approach is to count the experiments and relate them to the count of negative events, namely of closures. We will study the numerator and the denominator of this ratio extensively. We will assume that the closure probability neither changes over time nor with age. There are two questions that arise in this approach. First, the survival experiments for an enterprise will typically not be observed from its foundation onward, but are counted from some specific age onward. What is the impact of ignoring the survival before that age, i.e. of forgetting it? Second, the number of survival experiments will not be the same for all enterprises, because some will be founded earlier and some later, and our panel starts observation for all at the same time. Assuming the independence among enterprises, how can the asymptotic standard error and asymptotic normality still be derived with a limit theorem that obviously must obey a Lindeberg-L\'{e}vy condition due to the different variances?     

 Our lifetime model measures enterprise survival as rounds of a game, the `years' (called `epochs' in \cite{Fel1}) rather than considering time as real-valued. The reason is that the official statistics records survival annually. `Age' is the counted years from foundation.

\subsection{Data: The AFiD Panel} \label{sec11}
The German Statistical Office (DESTATIS) stores ``Amtliche Firmendaten f\"ur Deutschland'' (translates as ``official enterprise data for Germany'') annually. With effect of 2018, as reporting year, DESTATIS has started their reports on business activity with statistical unit in the enterprise definition by the European Commission. DESTATIS has disclosed enterprise closures in the years 2018 and 2019, as well as information on enterprises that survived 2019 (Table \ref{Data}). The data is a panel, by record linkage information on equal enterprises for different years allows to the identify individual path for each enterprise. 
The sampling design is retrospective because foundation years before 2018 are attributed. For the ease of methodological display we limit interest to population of enterprises characterized by foundation in the five ($G$) years 2013--2017.
\begin{table}[b]	
	\caption{Counts of German enterprises closed down 2018 or later. Recorded enterprises have been founded in  the years 2013--2017.} \label{Data} 
	\begin{center}
		\begin{tabular}{cccc}
			\hline \hline 
		Total no. of & 	\multicolumn{3}{c}{Thereof with year of closure ...}    \\
		observations ($m$) &	2018 ($d^{obs}_j=1$) &      2019 ($d^{obs}_j=2$) &  2020 or later ($m_{cens}$) \\ \hline \hline
	1,447,814	& 	168,112 & 107,050 & 1,172,652  \\ \hline \hline
		\end{tabular} \\
		(Explanation of numbers and symbols is distributed over larger parts of text.)
	\end{center}
\end{table}
The data in Table \ref{Data} contains left-truncated lifespans because lifetimes before 2018 are not observed. Enterprises that closed down before 2018 are unrecorded. (Enterprises closed after 2019 are right-censored.) \cite{rinkseif} and references therein contain a detailed description of the data. We use the attributes urs$\_$we$\_$beginn$\_$datum, as foundation year, and urs$\_$we$\_$ende$\_$datum, as year of closure (with label `3000' for right-censoring). Note that if the table did represent, prospectively, for enterprises founded in 2013--2017, numbers of closures in 2018 and later, the analysis would be easier.  

Because our model disregards the possibility of the closure probability to change with age, the knowledge of the foundation year for each enterprise will be redundant. With possession of the foundation year, more complex models can be fitted to the data than ours, by using the same method that we will describe. 

\subsection{Literature review}

Cross-sectional statistics for business survival mostly concentrates on new firms \cite[see][for China, Germany, Japan, Portugal and the USA]{cochran1981,Mata1994,Audretsch1995,Honjo2000,rinkseif,kato2023,pittiglio2023}. These studies do not include retrospective conclusions and an early conclusive attempt is \cite{bruederl1992} who oversample  their retrospectives interviews with founder of closed enterprises, as compared interviews with active (and partly right-censored) enterprises. Life expectancy as measurement in business demographic has been studied in \cite{Reis2015} for Portuguese enterprises. Recently results on truncation (however for double-truncation) with application to business survival contain \cite{jacobo2024} for Spain and \cite{topaweis2024} for Germany. The study at hand is a time-discrete version of the influential study `Mortality of Diabetics in the County of Fyn' in \cite{And}. Left-truncation is known to be a typical design defect in multi-state models also by \cite{putter2006} and \cite{michemura24}. Ignoring truncation has been found to introduce a `immortal time bias' to the analysis \cite[see e.g.][]{HERNAN201670,yadav2021}.

	\section{Longitudinal dependence and memory}	\label{ltrcfixed} 
	
We will assume that in a set of enterprises, these develop independent or each other. With interest in the probability that an enterprise survives the next year, $1-\theta$, we could  consider only the first year after foundation of each enterprise. A simple random sample of Bernoulli experiments would be the respective data design. Integrating information about subsequent years for the enterprises, we cannot assume all survival experiments to form a simple random sample, because experiments for the same enterprise are for sure longitudinally dependent. Additionally, the number of experiments for each enterprise, i.e. the lifespan, is random and random sample sizes are not conventional for elementary inference.  	

The lifespan of an enterprise, $X$, i.e. the duration from its foundation to its closure, see e.g. the middle path in Figure \ref{model}, will become central. For simplicity of the display we do not assume the annual probability to close business to vary with the age of an enterprise, i.e. we assume a geometric distribution for the lifespan. 

\begin{figure}[htb!] \centering
	\setlength{\unitlength}{1.2cm}
	\begin{picture}(11.5,5.5)
		\linethickness{0.3mm}
		\put(0.1,1.65){\vector(1,0){9.5}} \put(9.0,1.2){time}
		\put(0.5,1.6){\line(0,1){0.1}} 
		\put(1.75,1.6){\line(0,1){0.1}}
		\put(3.1,1.6){\line(0,1){0.1}}
		\put(4.25,1.6){\line(0,1){0.1}}
		\put(5.5,1.6){\line(0,1){0.1}}
		\put(6.75,1.6){\line(0,1){0.1}}
		\put(8,1.6){\line(0,1){0.1}}
		
		\put(6.75,1.6){\line(0,1){3.7}} 
		\put(8,1.6){\line(0,1){3.7}}

		\put(0,1.15){2013}
		\put(2.5,1.15){2015}
		\put(5,1.15){2017}
		\put(6.25,1.15){2018}
		\put(7.5,1.15){2019}
		\put(8.5,1.15)
		
		\linethickness{0.15mm}

		\put(0.45,5.02){$\diamond$} \put(1.75,5.1){\circle{0.15}} \put(3,5.1){\circle*{0.15}}
		\multiput(0.6,5.1)(0.4,0){3}{\line(1,0){0.2}}
		\multiput(1.85,5.1)(0.4,0){3}{\line(1,0){0.2}}

		\put(1.7,4.44){$\diamond$} \put(3,4.5){\circle{0.15}} \put(4.25,4.5){\circle{0.15}} \put(5.5,4.5){\circle{0.15}} \put(6.75,4.5){\circle*{0.15}}
		\multiput(1.85,4.5)(0.4,0){3}{\line(1,0){0.2}}
		\multiput(3.1,4.5)(0.4,0){3}{\line(1,0){0.2}}
		\multiput(4.35,4.5)(0.4,0){3}{\line(1,0){0.2}}
		\multiput(5.6,4.5)(0.4,0){3}{\line(1,0){0.2}}

		\put(6.8,2.65){\line(1,0){1.15}}

		\put(1.7,2.61){$\diamond$} \put(3,2.65){\circle{0.15}} \put(4.25,2.65){\circle{0.15}} \put(5.5,2.65){\circle{0.15}} \put(6.75,2.65){\circle{0.15}} \put(8,2.65){\circle{0.15}}
		\multiput(1.85,2.65)(0.4,0){3}{\line(1,0){0.2}}
		\multiput(3.1,2.65)(0.4,0){3}{\line(1,0){0.2}}
		\multiput(4.35,2.65)(0.4,0){3}{\line(1,0){0.2}}
		\multiput(5.6,2.65)(0.4,0){3}{\line(1,0){0.2}}
		\multiput(6.85,2.65)(0.4,0){3}{\line(1,0){0.2}}
		\multiput(8.1,2.65)(0.4,0){4}{\line(1,0){0.2}}
		
		\put(1.05,4.4){$\underbrace{\vphantom{ } \hspace{4.5cm}}_{Age-one-year-before-observation-begin \,  T \, (t)}$}
		\put(1.75,3.6){$\underbrace{\vphantom{ } \hspace{6cm}}_{Lifespan \,  X}$}
		\put(1.75,2.55){$\underbrace{\vphantom{ } \hspace{7.5cm}}_{Age-at-observation-end \,  T \; (t)+2}$}
		\put(0.5,1.05){$\underbrace{\vphantom{ } \hspace{6 cm}}_{foundation \,  period}$}
		\put(6.4,1.05){$\underbrace{\vphantom{ } \hspace{1.5cm}}_{observation \,  period}$}
		
	\end{picture}
	\caption{Three cases of foundation  (white diamond), year of closure (black circle), observed duration (dashed line with white bullets) and unobserved duration (dashed line).
		(Top: left-truncated) Path for enterprises founded in 2013 and closed before 2018, namely in 2015, i.e. with (middle: observable) Path for enterprises founded in 2014 and closed in the observation period, namely in 2018 (bottom: right-censored) path for enterprise founded in 2014 with closure after 2019} \label{model} (Explanation of panels and symbols is distributed over larger parts of text.)
\end{figure}
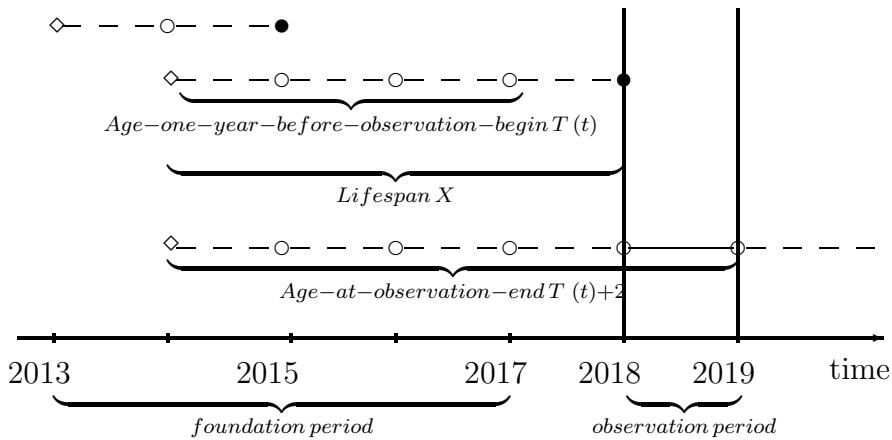

If lives of independent enterprises are followed from foundation until closure, be it that they are founded at the same time  or that foundations are scattered over time, a simple random sample assumption for the recorded $X$'s is appropriate. The maximum likelihood estimator for $\theta$ is then a ratio, where each enterprise contributes a numerical `1' to the numerator and the observed time at risk, $X$, to the denominator. Instead, our situation of Section \ref{sec11} is a panel and we derive now likelihood-based inference. In this Section, the study design is simplified to a fixed foundation year, i.e. formally to stratified sampling. This section sketches the main arguments and none of its results will be referenced later on. 
The next Section \ref{mortalran} then considers the realistic design with the foundation year as random. All arguments will there be given in mathematical rigour. 

	 Let $X$ be define on $(\Omega_X, \mathcal{F}, P_\theta)$ with probability density function (pdf) \citep[][Formula 10.71]{Jo0} 
	\begin{equation} \label{Grundmodell0}
	f_G(\mathbf{x})=\theta (1-\theta)^{\mathbf{x}-1}, \quad \mathbf{x} \in \mathbb{N}.
	\end{equation}

For simplicity of display we avoid that foundation takes place in the observation period (2018 and 2019) so that our population are enterprises founded 2017 or earlier (see Table \ref{Data} and Figure \ref{model}). That foundation before 2013 will later be ruled-out for practical reasons is theoretically without consequences. Instead of coding the foundation in calendar year, it will be useful to count years backwards from 2018 and such count $t$ is equal to the numbers of years elapsed from foundation until one year before observation starts. Figure \ref{model} illustrates $t$ with five foundation cohorts as population (middle path).
Also in the textbook situation of the linear regression one typically start with a pre-determined, fixed, covariate. Stratified sampling is rare for data in the social and economic sciences and the assumption is especially unrealistic for our application of the AFiD panel. We will continue in Section \ref{mortalran} with a random $T$ and, as in linear regression, with conditioning.  However, some arguments are easier to explain for deterministic $t$ (and are still true for random $T$). As in linear regression the distribution of the covariate $T$ will not require a specification so that the analysis is semi-parametric.   

The lifespan of an enterprise, $X$, is observed when closure has not yet occurred in 2017 (see Figure \ref{model}), more formally, 
\begin{equation}\label{preass3}
X \; \text{is observed if} \; X\ge t+1.
\end{equation}
Interestingly, a similar retrospective data collection lead \cite{bruederl1992} to `oversample' enterprises closed before the study start when sending questionnaires to their founders. We collect no information at all about those enterprises. The probability measure for the observed $X$ (or conditional on observation) is hence $P_{\theta}(\mathbb{G})/P_{\theta}(X\ge t+1)$ for $\mathbb{G} \in \sigma\{\Omega \setminus \{X\ge t+1\}\}$. 
Note that (i) the assumption of a deterministic $t$ implies its knowledge also for any unobserved enterprise, symbolized in the top path of Figure \ref{model}. Note (ii), that in this situation, the  size $n$ of the (latent) sample which is drawn from  the population of all enterprises. Note (iii) that the geometric distribution has no memory so that the conditional distribution is also a geometric distribution (with same parameter). We will refrain from using this property in order to enable easy generalization of the results.

For an enterprise that closed down before 2018 (top path in Figure \ref{model}), the unobserved survival experiments before 2018 are disregarded for a statistical analysis. Therefore, for an enterprise surviving 2018, its unobserved survival experiments (before 2018) should also be disregarded. The enterprise should contribute a `1' to the numerator and the time observably at risk of closure, $X-t$, to the denominator. For instance, for the middle path in Figure \ref{model} this results in $\mathbf{x}-t = 4 - 3 = 1$ year. The harder arguments for this intuitive bargain now follow.  (Note that, seeing the enterprise in 2018 at an age of $t+1$ implies $t$ positive survival experiments. A full maximum likelihood as in \cite{weiswied2021} would take account for all experiments.) 

We may express the $\mathbb{N}$-valued lifespan $X$ equivalently with the infinity vector `counting process' $N(x):=\mathds{1}_{\{X \le x\}}$, or its increments:
\begin{equation} \label{obsasproc}
	N:= \{N(x), x \in \mathbb{N}_0\} \; \text{or equivalently} \; \Delta N := \{\Delta N(x),  x \in \mathbb{N}\} 
	\end{equation} 
(Throughout the text, for whatever set $\mathbb{G} \subset \Omega$, $\mathds{1}_{\mathbb{G}}$ becomes one when $\omega \in \mathbb{G}$, e.g. if event has occurred, and is zero else. Also note that for whatever process $Z$, it is $\Delta Z(x):= Z(x)- Z(x-1)$ and here $N(0):=0$.)
We will see that we can confine the random vectors to a finite dimension. The probability measure $P_{\theta}$ for $X$ translates in obvious manner to one for $N$, and we use - a bit ambiguously - again $P_{\theta}$, with pdf $f^N$, alike for the $\sigma$-field $\mathcal{F}$. For inference, we could think of marginalizing the distribution of the vectors $N$ or $\Delta N$ to experiments starting with $t$ \cite[see][Definition 7.2(i)]{gourieroux1995}. However, with the temporal interpretation in mind, and as $t$ will become a random $T$ in the next Section \ref{mortalran}, we use the concept of forgetting. It can more easily be generalized, because $T$ will become a stopping time. 

\subsection{Marginalization and filtration}

We  describe a probability model for the vector $N$ in order to forget the survival experiments before the fixed $t$. It requires a concept of the `past', as opposed to  the presence and the future.
 
To be able to answer at an age $x$ every question `$X \le k$?'  for whatever $k \le x$ - i.e. to know at the age $x$ whether the enterprise is closed yet, and if, when, - defines the filtration consisting of $\mathcal{F}_x:=\sigma\{\mathds{1}_{\{X \le k\}}, 1 \le k  \le x\}$ ($\mathcal{F}_0:= \sigma \{\emptyset\}$). Note that $\mathcal{F}_x=\sigma\{\mathds{1}_{\{X = k\}}, 1 \le k  \le x\}$, however the inequality relation will later allow to the replace the indicators by counting processes. (Formally it is said, that for an enterprise randomly selected from the population, $X$ is defined on the stochastic basis $(\Omega_X, \mathcal{F}, \{\mathcal{F}_x, x \geq 0\}, P_\theta)$.) Of course $E_{\theta}(\Delta N(x)\vert \mathcal{F}_{x-1})=\theta \mathds{1}_{\{X \ge x\}}=: \Delta A(x,\theta)$ ($A(0,\theta):=0$) being a conventional concept of `location' for a stochastic process.

In the context of stochastic processes, the marginalization is reducing attention and corresponds to a coarser filtration. (Increasing the filtration will be necessary when the random $T$ introduces more information.) Initiating attention from some age $t$ onward (see also Figure \ref{model}), requires to exclude earlier outcomes from the probability model, formally represented by the set $\{ \emptyset, 1 \leq k \leq t\}$. To know at the age of $x \ge t+1$, whether the enterprise is closed yet ($\mathds{1}_{\{X \ge t+1\}}$), and if, when, - is now
\begin{equation*}
	\leftindex_t{\mathcal{G}}_x  :=  \sigma\{ \mathds{1}_{\{t +1 \le X \le k\}}, \mathds{1}_{\{X \ge k+1\}}, t+1 \le k \le x\}.
\end{equation*}	
Note that the processes $\mathds{1}_{\{X \ge k+1\}}$ can replace the information $\mathds{1}_{\{X \ge t+1\}}$ because $\mathds{1}_{\{X \ge k+1\}}$ for $k \ge t+1$ are in combination with $\mathds{1}_{\{t +1 \le X \le k\}}$ redundant. In view of Figure \ref{model} we can call this an {\em observed} filtration.

If $X \le t$ as well as $N(x)=1$ for $x \ge t$ (top path in Figure \ref{model}), no observable development will occur after $t$. We `gauge' it to zero by $\leftindex_t{N}(x):=N(x) - N(x \wedge t)=\mathds{1}_{\{t+1 \le X\}} \mathds{1}_{\{X \le x\}}$ (with $a \wedge b := min (a,b)$ for $a,b \in \mathbb{N}$), i.e. $\leftindex_t{N}$ is $N$ if $t+1 \leq X$ and constantly zero otherwise.  Note that this does not change the generated filtration $\sigma\{\leftindex_t{N}(k), \mathds{1}_{\{X \ge k+1\}}, t+1 \le k \le x\}=\leftindex_t{\mathcal{G}}_x$. (Note that we do not distinguish symbolically between a filtration $\{\mathcal{H}_x\}$ and its element, the $\sigma$-field $\mathcal{H}_x$, as long as the text supplies the distinction.)

Remember that a likelihood is a density -  similarly, the marginal likelihood is the marginal density, starting from $t$ - with respect to some dominating measure -  evaluated at the observed data. Now the $N$ in \eqref{obsasproc}, but also $\leftindex_t{N}$, can be regarded as vector of Bernoulli random variables with pdf $f^{\leftindex_t{N}}$. (The `marginalisation' by deciding to start observation of enterprise-specific experiments from $t$ - and later $T$ - onward, will not be mentioned from now on, most of the time. We will act as if given as data $\leftindex_t{N}$.) Consider now the longitudinal dependence in the vector. Without interference with the assumption of a geometric distribution for $X$, we can stop consideration about lifespans in general at an age of $\chi$, an unimportant large number. (That a discussion about $\chi$ is largely not necessary, will later become clear in Section \ref{sec2_2}.) The next Lemma will give two representations of the marginal pdf of $\leftindex_t{N}$. The first will allow an intuitive interpretation and the second will prepare for the estimator of $\theta$ to be analysed as martingale transform, especially it will simplify to derive the standard error.

 With respect to the second aim, note first that (for $x \ge t+1$)
  \begin{equation*}
  	\mathds{1}_{\{x \le X\}} \theta
  	\stackrel{(I)}{=}  Pr\{\Delta \leftindex_t{N}(x)=1 \vert \leftindex_t{\mathcal{G}}_{x-1}\} 
  	\stackrel{(II)}{=}  \mathbb{E}(\Delta \leftindex_t{N}(x) \vert \leftindex_t{\mathcal{G}}_{x-1}),
  \end{equation*}
  where (I) is because of the Markovian property of $\leftindex_t{N}$ and (II) is because $\Delta \leftindex_t{N}(x)$ is a Bernoulli random variable. (Note that by using $Pr$ and $\mathbb{E}$ we signal that we will need to discuss the exact dominating measure, but postpone the discussion to Section \ref{mortalran}.) By this (and called Doob decomposition),
  \begin{equation} \label{defmart}
  	\leftindex_t{N}(x) - \underbrace{\theta \sum_{k=t}^x \mathds{1}_{\{k \le X\}}}_{=:\leftindex_t{A}(x, \theta)}
  \end{equation}
  \citet[][Condition (35.1) of Section 35]{Billing} to be a $\leftindex_t{\mathcal{G}}_x$-martingale  is fulfilled (and $\leftindex_t{A}$ the `compensator' of $\leftindex_t{N}$ and hence $\leftindex_t{\mathcal{G}}_{x-1}$-measurable, i.e. deterministic at age $x-1$). (All other conditions of a martingale \citep[see][Section 35]{Billing} are also fulfilled. For the sake of brevity, their proofs will only be given for the random design.)

\begin{lemma} \label{lemmaeinfach}
Under Assumption \ref{Grundmodell0} and definitions \eqref{obsasproc}, the pdf of the enterprise life until $\chi$, conditional on the observation status $\{t+1 \le X\}$, and marginal from $t$, (with $0^0:=1$) is
\begin{multline} \label{likecontr2}
f^{\leftindex_t{N}\vert \mathds{1}_{\{t+1 \le X\}}}(\leftindex_t{n}\vert \mathds{1}_{\{t+1 \le \mathbf{x}\}}) \\ =  \mathds{1}_{\{t+1 \le \mathbf{x}\}} (1- \theta)^{\mathbf{x}-(t+1)} \theta + \mathds{1}_{\{t \ge \mathbf{x}\}} \\
=  \prod_{x=t+1}^{\chi}  (1 - \Delta \leftindex_t{a}(x, \theta))^{(1- \Delta \leftindex_t{n}(x))} (\Delta \leftindex_t{a}(x, \theta))^{\Delta \leftindex_t{n}(x)}, 
\end{multline}
where $\leftindex_t{n}$, $\leftindex_t{a}$ and $\mathbf{x}$ are the outcomes (and later realizations) of $\leftindex_t{N}$, $\leftindex_t{A}$ and $X$.
\end{lemma}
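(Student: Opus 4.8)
The plan is to prove the two displayed equalities in turn: the first line is the substantive probabilistic content, namely the conditional law of the gauged path $\leftindex_t{N}$, while the second line is an algebraic recasting of that law into a product of one-step Bernoulli densities. For the first equality I would compute the conditional density of $\leftindex_t{N}$ given the observation indicator $\mathds{1}_{\{t+1\le X\}}$ directly, using that $\leftindex_t{N}=\mathds{1}_{\{t+1\le X\}}\,\mathds{1}_{\{X\le\cdot\}}$ is a deterministic function of $X$ and splitting along the two values of the indicator. On the non-observation branch $\{X\le t\}$ the whole path is identically zero, so conditionally on $\mathds{1}_{\{t+1\le X\}}=0$ it is degenerate with conditional density $1$, which produces the summand $\mathds{1}_{\{t\ge\mathbf{x}\}}$. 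On the observation branch $\{X\ge t+1\}$ the path carries a single jump located exactly at $x=X$, so there $\leftindex_t{N}$ encodes $X$ bijectively and its conditional density equals $P_{\theta}(X=\mathbf{x})/P_{\theta}(X\ge t+1)$; inserting the geometric form \eqref{Grundmodell0} together with $P_{\theta}(X\ge t+1)=\sum_{k\ge t+1}\theta(1-\theta)^{k-1}=(1-\theta)^t$ collapses this to $\theta(1-\theta)^{\mathbf{x}-(t+1)}$, which is the first summand.

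For the second equality I would read off from \eqref{defmart} the realized compensator increment $\Delta\leftindex_t{a}(x,\theta)=\theta\,\mathds{1}_{\{x\le\mathbf{x}\}}$ and note that $\Delta\leftindex_t{n}(x)=\mathds{1}_{\{x=\mathbf{x},\,\mathbf{x}\ge t+1\}}$ is the single increment of the path, and then evaluate the factors of the product over $x=t+1,\dots,\chi$ regime by regime. On the observation branch, for $t+1\le x<\mathbf{x}$ one has $\Delta\leftindex_t{n}(x)=0$ and $\Delta\leftindex_t{a}(x,\theta)=\theta$, so the factor is $(1-\theta)$; at $x=\mathbf{x}$ one has $\Delta\leftindex_t{n}(x)=1$, so the factor is $\theta$; for $x>\mathbf{x}$ both increments vanish and the factor is $(1-0)^{1}(0)^{0}=1$. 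Multiplying the $\mathbf{x}-1-t=\mathbf{x}-(t+1)$ many factors $(1-\theta)$ by the single $\theta$ reproduces $\theta(1-\theta)^{\mathbf{x}-(t+1)}$; on the non-observation branch every factor equals $1$, since for all $x\ge t+1$ both $\Delta\leftindex_t{n}(x)=0$ and $\Delta\leftindex_t{a}(x,\theta)=0$. This matches the first line case by case.

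The computations themselves are elementary, so the genuine care lies in the bookkeeping: the convention $0^0:=1$ is indispensable for the factors with $x>\mathbf{x}$ and for the entire non-observation branch, and one must check the exponent count $(\mathbf{x}-1)-(t+1)+1=\mathbf{x}-(t+1)$ so that the two representations agree. The one conceptual point I would flag, rather than fully resolve in the spirit of the text, is the degenerate conditional density $1$ on $\{X\le t\}$: collapsing all outcomes with $X\le t$ onto the single zero path must be compatible with a common dominating measure, and this is exactly the subtlety the passage to the stopping-time filtration in Section \ref{mortalran} is built to handle. It is also worth noting that the product form is not accidental, as it is the telescoping chain-rule factorization of the joint density into one-step conditional Bernoulli densities whose success probabilities are the compensator increments of \eqref{defmart}, which is precisely what prepares the estimator to be read later as a martingale transform.
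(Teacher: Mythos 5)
Your proof is correct, but it inverts the logic of the paper's argument. The paper proves the \emph{second} line first: it invokes the Markov property of $\leftindex_t{N}$ to factorize the conditional pdf into a telescoping product of one-step transition densities (its display \eqref{likecontr}), computes those transition probabilities case by case from \eqref{conprobs} on both the observation and non-observation branches, and only then reads off the closed form in the first line by evaluating the product. You instead obtain the first line directly as the truncated geometric density $P_{\theta}(X=\mathbf{x})/P_{\theta}(X\ge t+1)=\theta(1-\theta)^{\mathbf{x}-(t+1)}$, exploiting that the gauged path is a bijective recoding of $X$ on the observation branch and degenerate on the other, and then verify the second line by a regime-by-regime evaluation of the product with $\Delta\leftindex_t{a}(x,\theta)=\theta\,\mathds{1}_{\{x\le\mathbf{x}\}}$. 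Both computations are sound (your exponent count and your use of $0^0:=1$ match the paper's conventions, and $P_{\theta}(X\ge t+1)=(1-\theta)^t$ is correct), and your route is the more elementary one for this specific lemma. What it gives up is exactly what the paper's route is designed to supply: the chain-rule factorization into conditional Bernoulli densities is the step that survives when $t$ becomes the stopping time $T$ in Section \ref{mortalran} and when the model is no longer memoryless, whereas your direct computation leans on the geometric distribution's lack of memory and on the path being a function of the scalar $X$, neither of which generalizes. You acknowledge this in your closing remark, but in your write-up the factorization appears as a comment rather than as the proof mechanism; the paper deliberately makes it the mechanism. Your flag about the common dominating measure on the collapsed non-observation branch is apt and is the same caveat the paper postpones to the random-truncation section.
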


(Throughout the text and in extension to $\mathds{1}_{\mathbb{G}}$, $\mathds{1}_{condition}$ becomes one when the condition is true and is zero else.)
	
\begin{proof}
	Note that
	\begin{multline} \label{likecontr}
		f^{\leftindex_t{N}\vert \mathds{1}_{\{t+1 \le X\}}}(\leftindex_t{n}\vert \mathds{1}_{\{t+1 \le \mathbf{x}\}}) \\  \stackrel{(i)}{=}  \prod_{x=t+1}^{\chi} f^{\leftindex_t{N}(x) \vert \leftindex_t{N}(x-1),\mathds{1}_{\{t+1 \le X\}}}(\leftindex_t{n}(x) \vert \leftindex_t{n}({x-1}),\mathds{1}_{\{t+1 \le \mathbf{x}\}})  \\
		 \stackrel{(ii)}{=}   \prod_{x=t+1}^{\chi} (1 - \mathds{1}_{\{x \le \mathbf{x}\}} \theta)^{1- \Delta \leftindex_t{n}(x)} (\mathds{1}_{\{x \le \mathbf{x}\}} \theta)^{\Delta \leftindex_t{n}(x)}  
	\end{multline}
	
For (i). Both $\leftindex_t{N}$ and ${N}$ are obviously Markovian, e.g. $P_{\theta}(N(3)=b\vert N(2)=a_2, N(1)=a_1)=P_{\theta}(N(3)=b\vert N(2)=a_2)$. 

For (ii): It is for $x \in \mathbb{N}$: 
\begin{equation} \label{conprobs}
	P_{\theta}(N(x)=b \vert N(x-1) =a)	=\begin{cases} 1- \theta  & \text{for} \; a=0,b=0 \; (\Delta N(x)=0) \\
		\theta	&  \text{for} \; a=0,b=1 \; (\Delta N(x)=1)\\
		1	&  \text{for} \; a=1,b=1 \; (\Delta N(x)=0) \\
	\end{cases}
\end{equation}

For an observable lifespan (i.e. $\mathbf{x} \ge t+1$) and $x \ge t+1$ the conditional pdf $P_{\theta}(\leftindex_t{N}(x)=b \vert \leftindex_t{N}(x-1) =a, X \ge t+1)$ is as well \eqref{conprobs}. 
For an unobserved lifespan (i.e. $\mathbf{x} \le t$) it is: 
	\begin{equation*}
	P_{\theta}(\leftindex_t{N}(x)=b \vert \leftindex_t{N}(x-1) =a, X \le t)= 
	\begin{cases}
		1 & \text{for} \; (a,b)=(0,0) \;  (\Delta \leftindex_t{N}(x)=0)\\
		0 & \text{for} \; (a,b)=(0,1)  \;  (\Delta \leftindex_t{N}(x)=0)\\
		0 & \text{for} \; (a,b)=(1,1)  \;  (\Delta \leftindex_t{N}(x)=0)	
	\end{cases}	
\end{equation*}
The $\mathds{1}_{\{x \le \mathbf{x}\}}$ in the likelihood contribution prevents that further $(1- \theta)$'s are multiplied after closure has been reached. Note the longitudinal dependence, i.e. that $N(x)$ and $N(x-1)$ are not stochastically independent, is given by the absence of $(a,b)=(1,0)$ in  \eqref{conprobs} and will later prohibit a central limit theorem and require a martingale limit theorem.

Representation \eqref{likecontr} allows now the first line of \eqref{likecontr2} directly (with $0^0:=1$) and the second line (with \eqref{defmart}). \qed
\end{proof}

A proof which avoids the Markov argument and is hence easier generalized to multi-state models, uses the filtration $\leftindex_t{\mathcal{G}}$ explicitly and the equivalence \eqref{obsasproc}. The pdf $f^{\Delta \leftindex_t{N}, \leftindex_t{Y}}$ is then  (evaluated at $\Delta \leftindex_t{N}, \leftindex_t{Y}$) a product of factors such as  
\begin{eqnarray*}
P_{\theta} (\Delta \leftindex_t{N}(x)=1 \vert \leftindex_t{\mathcal{G}}_{x-1})^{\Delta \leftindex_t{N}(x)} & = & E_{\theta} (\Delta \leftindex_t{N}(x)\vert \leftindex_t{\mathcal{G}}_{x-1})^{\Delta \leftindex_t{N}(x)} \\
& = & (\Delta \leftindex_t{A}(x))^{\Delta \leftindex_t{N}(x)}.
\end{eqnarray*}

 We can now give the answer to the first question in Section \ref{introduc} as more rigorous reason for the estimation procedure described at the beginning of this Section \ref{ltrcfixed}.
 Note that the (log) conditional likelihood sum runs over all outcomes of the condition, here the observability status $\mathds{1}_{\{t+1 \le X\}}$ namely $\mathds{1}_{\{t+1 \le \mathbf{x}\}} =0$ and $\mathds{1}_{\{t+1 \le \mathbf{x}\}}=1$ \citep[see][Definition 7.2(ii)]{gourieroux1995}. The contribution $f^{\leftindex_t{N}\vert \mathds{1}_{\{t+1 \le X\}}}$ of a unobserved lifetime, $\mathds{1}_{\{t+1 \le X\}}=0$ (Figure \ref{model}, top) has a compensator of value zero, thereby ensuring that the likelihood contribution \eqref{likecontr2} is one. 
  Now for an observation, i.e. $\mathbf{x} \ge t+1$, it is by the first represenation of \eqref{likecontr2} 
 \begin{eqnarray*}
 	\frac{d}{d \theta}  \log f^{\leftindex_t{N}\vert \mathds{1}_{\{t+1 \le X\}}}(\leftindex_t{n}\vert \mathds{1}_{\{t+1 \le \mathbf{x}\}})  =  0 
 	\Leftrightarrow \theta = (\mathbf{x}-t)^{-1} \mathds{1}_{\{t+1 \le \mathbf{x}\}}
 \end{eqnarray*}
Omitting the enterprises that closed down before 2018 (in the estimator's numerator) in order to balance the reduction (in the denominator) to the `time at observable risk' of $\mathbf{x}-t$ years for those enterprises that survived 2017 maximizes the conditional likelihood.
 
Consistency of the approach will now be shown in the next main section. A first result will be that for random $T$ an enterprise with $X \le T$ will not be observed at all (not only its lifespan is not observed). Still, the its likelihood contribution will be one.  We will also see that in the likelihood contribution \eqref{likecontr2} (first line), we can simply replace $t$ by $T$. Before that, the next subsection ins concerned with those enterprises that survive 2019, the right boundary of observation period.

\subsection{Adjustment for right-censoring} \label{sec2_2}

Observation of any enterprise ends after the $s=2$ two years 2018 and 2019. Right-censored is a lifespan if the enterprise closures down after 2020 or later (see Figure \ref{model} (bottom path) and Table \ref{Data} (last column)). By superimposing right-censoring on left-truncation, the resulting process is the observed left-truncated and right-censored counting process and can be expressed by $\leftindex_t{N}^c(x) := \sum_{k=1}^x C(k) \Delta  \leftindex_t{N}(k)= \mathds{1}_{\{t+1 \le X \le x \wedge (t+2)\}}$ with $C(x):= \mathds{1}_{\{x \le t+2\}}$  \cite[compare][Example 1.4.2]{flem1991}.

The corresponding compensator is 
$\leftindex_t{A}^c(x, \theta) := \theta \sum_{k=1}^x \mathds{1}_{\{t \le k \le X \wedge (t+2) \}}$	
with to an observed filtration 
\[
\leftindex_t{\mathcal{F}}^c_x := \sigma\{\leftindex_t{N}^c(k), t+1 \le k \le x\}= \sigma\{\mathds{1}_{\{t+1 \le X \le k \wedge  (t+2)\}},  t+1 \le k \le x\}.
\]
Note now that we can limit the scope of the vector in \eqref{obsasproc} and in Lemma \ref{lemmaeinfach} to $x \in \{0,1, \ldots, \chi\}$ with $\chi \le s+G-1$. (For instance $\chi=5$  for the two bottom histories in Figure \ref{model}.)

\section{Representing the estimator as martingale} \label{mortalran}

In observational data, as of Section \ref{sec11}, the measurement of interest, $\mathbf{x}$, as well as any covariate, such as $t$, must not be considered differently. Hence the foundation year now is also a random variable, namely $T$ (see Figure \ref{model}). From a practical point of view, inference about enterprises founded hundreds of years in the past, appears to be of little use. We assume a short, but at least finite, support of $T$, without further specification. (Note that $T$ is uniformly distributed when assuming a homogeneous Poisson process for foundation \citep[][Lemma 2]{Doe}.) 

\vspace*{0.3cm}
\begin{enumerate}[label=\AnnANumm]
	\item \label{Grundmodellxt} Let $(X,T)'$ be define on $(\Omega, \mathcal{G}:=\mathcal{P}(\Omega), \tilde{P}_{\theta})$, with marginal probability density function of $X$ as in \eqref{Grundmodell0}, $T \in \{0,1, \ldots, G-1\}$ and $X$ be stochastically independent of $T$. 
\end{enumerate}
\vspace*{0.3cm}

Strictly, the notations in this section for random foundation and in the last section of fixed foundation are separated. For instance, $\tilde{P}_{\theta}$ stresses that it must not be $P_\theta$ of Section \ref{ltrcfixed} because $\Omega_X$ and $\Omega$ are unequal. It still is sometimes useful to compare models so that we use equal but redefined symbols for `very close' meanings. 
For the example in Figure \ref{domimagxt}, with $G=5$ of our data example, in $\Omega = \{\omega_1, \omega_2, \ldots \}$, we attribute $\omega_1, \ldots, \omega_5$ to $X=1$,  $\omega_6, \ldots, \omega_{10}$ to $X=2$, and so on. With $\mathbb{X}_1:=X^{-1}(1)=\{\omega_1, \ldots, \omega_5\}$, $\mathbb{X}_2:=\{\omega_6, \ldots, \omega_{10}\}$, and so on, $X$ is measurable with respect to $\mathcal{F}:= \sigma \{\mathbb{X}_1, \mathbb{X}_2, \ldots\} \subset \mathcal{G}$. (The symbol $\mathcal{F}$ is equal to Section \ref{ltrcfixed} because of the same meaning, but now must account for the random $T$.)

\setlength{\unitlength}{1.0cm}
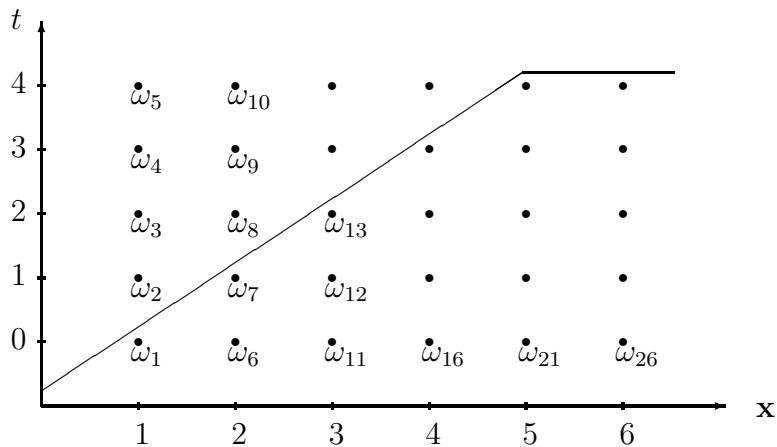
\begin{figure}[htb!] \centering
	\begin{picture}(12,6.0)
		\linethickness{0.3mm}
		\put(1.1,0.8){\vector(1,0){9}} \put(10.5,0.65){$\mathbf{x}$}
		\put(1.1,0.8){\vector(0,1){5.1}} \put(0.7,5.8){$t$}
		
		\put(1.05,1.65){\line(1,0){0.1}}
		\put(1.05,2.5){\line(1,0){0.1}}
		\put(1.05,3.35){\line(1,0){0.1}}
		\put(1.05,4.2){\line(1,0){0.1}}
		\put(1.05,5.05){\line(1,0){0.1}}
		
		\put(1.1,1.0){\line(3,2){6.33}}
		\put(7.42,5.22){\line(1,0){2}}

		\put(2.375,0.75){\line(0,1){0.1}}
		
		\put(3.65,0.75){\line(0,1){0.1}}
		
		\put(4.925,0.75){\line(0,1){0.1}}
		
		\put(6.2,0.75){\line(0,1){0.1}}
		
		\put(7.475,0.75){\line(0,1){0.1}}
		
		\put(8.75,0.75){\line(0,1){0.1}}
		
		
		\put(0.7,1.55){0}
		\put(0.7,2.4){1}
		\put(0.7,3.25){2}
		\put(0.7,4.1){3}
		\put(0.7,4.95){4}
		
		\put(2.325,0.3){1}
		\put(3.6,0.3){2}
		\put(4.875,0.3){3}
		\put(6.15,0.3){4}
		\put(7.425,0.3){5}
		\put(8.7,0.3){6}
		\linethickness{0.15mm}

		\put(2.375,1.65){\circle*{0.1}} \put(2.275,1.4){$\omega_1$}
		\put(2.375,2.5){\circle*{0.1}} \put(2.275,2.25){$\omega_2$}
		\put(2.375,3.35){\circle*{0.1}} \put(2.275,3.1){$\omega_3$}
		\put(2.375,4.2){\circle*{0.1}} \put(2.275,3.95){$\omega_4$}
		\put(2.375,5.05){\circle*{0.1}} \put(2.275,4.8){$\omega_5$}
		
		\put(3.65,1.65){\circle*{0.1}} \put(3.55,1.4){$\omega_6$}
		\put(3.65,2.5){\circle*{0.1}} \put(3.55,2.25){$\omega_7$}
		\put(3.65,3.35){\circle*{0.1}} \put(3.55,3.1){$\omega_8$}
		\put(3.65,4.2){\circle*{0.1}} \put(3.55,3.95){$\omega_9$}
		\put(3.65,5.05){\circle*{0.1}} \put(3.55,4.8){$\omega_{10}$}
		
		\put(4.825,1.4){$\omega_{11}$}
		\put(4.825,2.25){$\omega_{12}$}
		\put(4.825,3.1){$\omega_{13}$}
		\put(6.1,1.4){$\omega_{16}$}
		\put(7.375,1.4){$\omega_{21}$}
		\put(8.65,1.4){$\omega_{26}$}

		\put(4.925,1.65){\circle*{0.1}} 
		\put(4.925,2.5){\circle*{0.1}}
		\put(4.925,3.35){\circle*{0.1}}
		\put(4.925,4.2){\circle*{0.1}}
		\put(4.925,5.05){\circle*{0.1}}

		\put(6.2,1.65){\circle*{0.1}} 
		\put(6.2,2.5){\circle*{0.1}}
		\put(6.2,3.35){\circle*{0.1}}
		\put(6.2,4.2){\circle*{0.1}}
		\put(6.2,5.05){\circle*{0.1}}
		
		\put(7.475,1.65){\circle*{0.1}} 
		\put(7.475,2.5){\circle*{0.1}}
		\put(7.475,3.35){\circle*{0.1}}
		\put(7.475,4.2){\circle*{0.1}}
		\put(7.475,5.05){\circle*{0.1}}
		
		\put(8.75,1.65){\circle*{0.1}} 
		\put(8.75,2.5){\circle*{0.1}}
		\put(8.75,3.35){\circle*{0.1}}
		\put(8.75,4.2){\circle*{0.1}}
		\put(8.75,5.05){\circle*{0.1}}

	\end{picture}
	\caption{Year of foundation $T$ for the example of $G=5$ and lifespan $X$: domain and image of $(X,T)'(\omega)$, set below line $\mathbb{T}$ (linear up to and constant after $\mathbf{x}=5$) represents observable outcomes} \label{domimagxt}
\end{figure}

We define event $\mathbb{T}:=\{X \ge T+1\}$.  As is the case for the AFiD panel, we observe lifespan $X$ and foundation year $T$ in the case of $\mathbb{T}$, i.e. when closure has in 2017 not yet occurred. As in Section \ref{ltrcfixed}, we defer censoring to the end of the section. For the time being, we imagine the Table \ref{Data} to be `prolonged' toward the right until infinity. 
\vspace*{0.3cm}
\begin{enumerate}[label=\AnnANumm]
	\setcounter{enumi}{1}	
	\item \label{obscond} $(X,T)$ is observed if $X\ge T +1$.
\end{enumerate}
\vspace*{0.3cm}
 Neither measurement $X$ nor $T$ - and not even the enterprise at all - are observed when $X \le T$ (see again top path of Figure \ref{model}). 

\subsection{Decomposition for a latent filtration} \label{sec221}

The symbol $N$ is equal to Section \ref{ltrcfixed} because of the same definition. A filtration is, to know the year of foundation of an enterprise and whether, at the age of $x$, an enterprises is closed, and if when, i.e. $\mathcal{G}_x := \sigma\{ N(k), \mathds{1}_{\{T \le t\}}, 1 \le k \le x, 0 \le t \le G-1\}$. This filtration contains unobservable information about, latent experiments. For instance, for the top path of Figure \ref{model}, it stores for $x=3$ the answer that the enterprise had been founded in year $t=4$, i.e. in 2013, and that it had been closed in $x=2$. It will still become important to consider it. With respect to $\tilde{P}_{\theta}$, for the univariate (marginal) process it is (for $x \in \mathbb{N}$)
\begin{equation} \label{con2d1d}
E_{\theta}(\Delta N(x) \vert \mathcal{G}_{x-1})=\theta Y(x-1)=:\Delta A(x, \theta) \quad \text{and} \; A(0,\theta):=0.
\end{equation}
(Note that $E_{\theta} Z:= \int Z d \tilde{P}_{\theta}$ and there is not confusion with $E_{\theta}$ of Section \ref{ltrcfixed}.) The straight forward derivation of the conditional expectation in \eqref{con2d1d} uses that $T$ and $X$ are independent (by Assumption \ref{Grundmodellxt}). The notation $Y(x):= \mathds{1}_{\{X \ge x+1 \}}$ indicates whether an enterprise is not yet closed at the age of $x$. Hence the $(\mathcal{G}_x,\tilde{P}_{\theta})$-compensator of $N$ is, similar to \eqref{defmart}, $A(x, \theta)=\theta \sum_{k=1}^x  Y(k-1)$.

The $\sigma$-field $\mathcal{G}_x$ contains the information about the year of foundation $T$ so that $\{T > x\}$ can be answered,  i.e. one knows whether an enterprise in closed when the observation starts. Equivalently holds for its complement $\{T \le x\} \in \mathcal{G}_x$, i.e. T is a $\mathcal{G}_x$ stopping time. The stopping-time property will become important in the proof of the next theorem and is hence the merit of coding the foundation $T$ `backwards'. \citet[][Formula 35.20]{Billing} describes the $\sigma$-field of events that occur until $T$ as 
\begin{equation} \label{gtdef}
\mathcal{G}_T:=\{\mathbb{G} \subseteq \Omega: \mathbb{G} \cap \{T \le x\} \in \mathcal{G}_x \; \forall x \in \mathbb{N}\}.
\end{equation}

The intuition for $\leftindex_T{N}(x):=N(x) - N(x \wedge T)$ is unchanged as compared to that for the fixed $t$. And to be observably at risk, i.e. that closure occurs after 2017, is now formally $\leftindex_T{Y}(x):=Y(x) \mathds{1}_{\{x \ge T\}}=\mathds{1}_{\{X-1 \ge x \ge T\}}$. Consider the information at the age of $max(x,T)$, namely $\leftindex_T{\mathcal{G}}_x:= \mathcal{G}_x \vee \mathcal{G}_T$ \cite[for a definition see][page 20]{Chung}, which builds again a filtration.  With proof in Appendix \ref{proofcomptn}: 

\begin{theorem} \label{comptn}
The process $\leftindex_TN-\leftindex_TA$, with $\leftindex_TA(x, \theta):= \theta \sum_{k=1}^{x}  \leftindex_T{Y}(k-1)$, is a martingale with respect to the filtration 
 $\{ \leftindex_T {\mathcal{G}}_x: x \in \mathbb{N}_0\}$ and probability measure 	$\tilde{P}_{\theta}^{\mathbb{T}}(\mathbb{G}):=\tilde{P}_{\theta}(\mathbb{G} \cap \mathbb{T})/\tilde{P}_{\theta}(\mathbb{T})$ for $\mathbb{G} \in \mathcal{G}$.
\end{theorem}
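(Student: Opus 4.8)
The plan is to check, for $\leftindex_T{M} := \leftindex_T{N} - \leftindex_T{A}$, the three defining properties of a martingale relative to $(\{\leftindex_T{\mathcal{G}}_x\}, \tilde{P}_\theta^{\mathbb{T}})$. Adaptedness and integrability are immediate, since on the finite horizon $x \le \chi$ both $\leftindex_T{N}(x) = \mathds{1}_{\{T+1 \le X \le x\}}$ and $\leftindex_T{A}(x,\theta)$ are bounded and $\leftindex_T{\mathcal{G}}_x$-measurable. Because each increment $\Delta \leftindex_T{A}(x,\theta) = \theta\,\leftindex_T{Y}(x-1) = \theta\,\mathds{1}_{\{X \ge x\}}\mathds{1}_{\{T \le x-1\}}$ is even $\mathcal{G}_{x-1}$-measurable, $\leftindex_T{A}$ is predictable, and the whole statement reduces to the single increment identity $E_\theta^{\mathbb{T}}(\Delta \leftindex_T{N}(x) \mid \leftindex_T{\mathcal{G}}_{x-1}) = \theta\,\leftindex_T{Y}(x-1)$, to be established $\tilde{P}_\theta^{\mathbb{T}}$-a.s.

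First I would move from the conditional measure $\tilde{P}_\theta^{\mathbb{T}}$ back to $\tilde{P}_\theta$. For any sub-$\sigma$-field $\mathcal{H}$ and bounded $Z$ one has $E_\theta^{\mathbb{T}}(Z \mid \mathcal{H}) = E_\theta(Z\mathds{1}_{\mathbb{T}} \mid \mathcal{H}) / E_\theta(\mathds{1}_{\mathbb{T}} \mid \mathcal{H})$ wherever the denominator is positive. The key observation making this tractable is that $\mathbb{T} = \{X \ge T+1\} \in \mathcal{G}_T$: indeed $\mathbb{T} \cap \{T \le x\} = \bigcup_{t \le x}\{T = t\}\cap\{N(t)=0\} \in \mathcal{G}_x$, so $\mathbb{T} \in \mathcal{G}_T$ by the definition \eqref{gtdef}, whence $\mathbb{T} \in \leftindex_T{\mathcal{G}}_{x-1}$. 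The denominator is then $\mathds{1}_{\mathbb{T}}$, which equals $1$ on the $\tilde{P}_\theta^{\mathbb{T}}$-full set $\mathbb{T}$, while $\Delta \leftindex_T{N}(x) = \mathds{1}_{\{X = x \ge T+1\}}$ vanishes off $\mathbb{T}$, so $\Delta \leftindex_T{N}(x)\mathds{1}_{\mathbb{T}} = \Delta \leftindex_T{N}(x)$. Thus on $\mathbb{T}$ the target collapses to the unconditional identity $E_\theta(\Delta \leftindex_T{N}(x) \mid \leftindex_T{\mathcal{G}}_{x-1}) = \theta\,\leftindex_T{Y}(x-1)$, and since $\tilde{P}_\theta^{\mathbb{T}}(\mathbb{T}^c)=0$ this suffices.

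To compute that unconditional conditional expectation under $\tilde{P}_\theta$, I would split on the $\leftindex_T{\mathcal{G}}_{x-1}$-measurable events $\{T \ge x\}$ and $\{T \le x-1\}$ (note $T$ is known already in $\mathcal{G}_0$). On $\{T \ge x\}$ both sides are zero, because $\Delta \leftindex_T{N}(x) = \mathds{1}_{\{X = x \ge T+1\}} = 0$ and $\leftindex_T{Y}(x-1) = 0$ there. On $\{T \le x-1\}$ one has $\Delta \leftindex_T{N}(x) = \Delta N(x) = \mathds{1}_{\{X=x\}}$ and $\leftindex_T{Y}(x-1) = Y(x-1)$, so the claim is exactly the compensator identity \eqref{con2d1d}, $E_\theta(\Delta N(x) \mid \mathcal{G}_{x-1}) = \theta\,Y(x-1)$, already available from the latent-filtration decomposition.

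The step I expect to be the main obstacle is justifying, on $\{T \le x-1\}$, the replacement of the enlarged filtration $\leftindex_T{\mathcal{G}}_{x-1} = \mathcal{G}_{x-1} \vee \mathcal{G}_T$ by $\mathcal{G}_{x-1}$, so that \eqref{con2d1d} applies; this is precisely where the stopping-time structure of $T$ (the payoff of coding foundation backwards) is needed. Because $T$ is a $\{\mathcal{G}_x\}$-stopping time, \eqref{gtdef} gives $\mathbb{G} \cap \{T \le x-1\} \in \mathcal{G}_{x-1}$ for every $\mathbb{G} \in \mathcal{G}_T$, so $\mathcal{G}_T$ and $\mathcal{G}_{x-1}$ have the same trace on the $\leftindex_T{\mathcal{G}}_{x-1}$-event $\{T \le x-1\}$; multiplying the conditional expectation by $\mathds{1}_{\{T \le x-1\}}$ then lets me pull the conditioning back to $\mathcal{G}_{x-1}$. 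I expect this localization, together with the conditional-measure quotient of the previous step, to be the only genuinely delicate points; everything else is substitution of the explicit indicator forms of $\leftindex_T{N}$, $\leftindex_T{Y}$ and $\leftindex_T{A}$.
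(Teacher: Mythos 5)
Your proposal is correct, but it reaches the conclusion by a genuinely different route than the paper. The paper verifies the defining integral identity $\mathrm{E}_{\theta}\{\mathds{1}_{\mathbb{G}\cap\mathbb{T}}[\leftindex_TM(x+v)-\leftindex_TM(x)]\}=0$ directly for every $\mathbb{G}\in\leftindex_T{\mathcal{G}}_x$ and every lag $v$, splitting on $\mathbb{G}\subseteq\{T\le x\}$ versus $\mathbb{G}\subseteq\{T\ge x+1\}$; in the second case the difference $\leftindex_TM(x+v)-\leftindex_TM(x)$ does \emph{not} vanish (since $x+v$ may exceed $T$), so the paper must invoke Doob's optional sampling theorem for the stopped martingale $M((x+v)\wedge T)$ together with a careful membership argument showing $\mathbb{K}\cap\mathbb{G}\cap\mathbb{T}\in\mathcal{G}_{(x+v)\wedge T}$. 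You instead reduce to one-step increments (legitimate, by the tower property), pass from $\tilde{P}_{\theta}^{\mathbb{T}}$ to $\tilde{P}_{\theta}$ via the abstract Bayes quotient --- exploiting that $\mathbb{T}\in\mathcal{G}_T\subseteq\leftindex_T{\mathcal{G}}_{x-1}$ and that $\Delta\leftindex_TN(x)$ and $\leftindex_TY(x-1)$ both vanish off $\mathbb{T}$ --- and then localize on $\{T\le x-1\}$, where the traces of $\mathcal{G}_T$ and $\mathcal{G}_{x-1}$ coincide by \eqref{gtdef}, so that \eqref{con2d1d} applies; on $\{T\ge x\}$ both sides are identically zero. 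Your version is shorter and avoids optional sampling entirely, precisely because the one-step increment is supported on $\{T\le x-1\}$; the price is reliance on two standard but unproved auxiliary facts (the conditional Bayes formula and the local property of conditional expectation for $\sigma$-fields with equal trace on a common event), whereas the paper's argument is self-contained at the level of the integral definition of conditional expectation and its stopping-time machinery transfers more directly to settings where increments do not vanish off the conditioning event. Two cosmetic caveats: the theorem is stated for all $x\in\mathbb{N}_0$, so integrability should be argued from $\leftindex_TA(x,\theta)\le\theta X$ with $\mathrm{E}_{\theta}(X)<\infty$ rather than from a finite horizon; and the local-property step deserves an explicit statement, since it is exactly where the stopping-time structure of $T$ enters.
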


Sometimes, the decomposition of a time-discrete counting process into a `trend', the compensator - and a `noise', the martingale, is called Doob decomposition \cite[][Theorem 9.3.2]{Chung}. (The much more involved Doob-Meyer decomposition for a time-continuous process \citep[see][Section 1.4]{flem1991} is fortunately not needed for the contingency Table \ref{Data}.)  The martingale property of $\leftindex_TN-\leftindex_TA$ implies that increments have expectation zero so that $\Delta \leftindex_T{A}(x,\theta)$ would be interpretable as probability of an Bernoulli experiment $\Delta \leftindex_T{N}(x)$.

Our aim is now to proceed as in Lemma \ref{lemmaeinfach} for the deterministic $t$, and derive a conditional likelihood contribution by interpreting the observable increments of $_TA$ (see Table \ref{Data} and later Table \ref{Datarow}, without the parameter) as probabilities of the observable Bernoulli experiments $\Delta _TN$ (see again Table \ref{Data}). Note that $\tilde{P}_{\theta}^{\mathbb{T}}$ is the probability measure $\tilde{P}_{\theta}$ restricted to $\mathbb{T}$, the set of observable outcomes.

Unfortunately, as mentioned at the beginning of the subsection $\mathcal{G}_x$  does not code the observable experiments, nor does  $\leftindex_T{\mathcal{G}}_x$, as we will now see. Hence, even though being conditional probabilities, they can not (yet) been used to build a conditional likelihood.

\subsection{Decomposition for the observed filtration}

The marginal information for the lifetime $X$ is stored in $\mathcal{F}$ (see comments after Assumption \ref{Grundmodellxt}). For the marginal information of the foundation year $T$ note first that foundation in 2017 is $T^{-1}(0)=\mathbb{T}_0:= \{\omega_1, \omega_6, \omega_{11}, \ldots \}$ (see Figure \ref{domimagxt}). Then $T^{-1}(1)= \mathbb{T}_1:= \{\omega_2, \omega_7, \ldots \}$ and so on for $\mathbb{T}_2$ until $\mathbb{T}_{G-1}$, so that $T$ is (minimal) measurable with respect to 
\begin{equation} \label{defg0}
	\mathcal{G}_0:= \sigma \{ \mathbb{T}_0, \ldots,  \mathbb{T}_{G-1}\}.
\end{equation}

 (Note that an infinite support for the distribution of $T$ would need to enumerate the elementary events of $\Omega$ differently; $\omega_6$ can become $\omega_3$, then $\omega_{11}$ will be $\omega_4$, and so an. Rational numbers are usually enumerated in that manner in their proof of countability.)

To know for an enterprise its year of foundation is less than to know for the enterprise everything that happens before one year before observation starts (i). And, together with the redefined $\mathcal{F}_x:=\sigma\{\mathbb{X}_0, \ldots, \mathbb{X}_x\}$ (equal in meaning to $\mathcal{F}_x$ of Section \ref{ltrcfixed}, but now accounting for the random $T$) the common information, until $x$, can be decomposed into marginal information (ii). 

\begin{lemma} \label{g0gt}
It is (i) $\mathcal{G}_0 \subseteq \mathcal{G}_T$ and (ii) $\mathcal{G}_x = \mathcal{G}_0 \vee \mathcal{F}_x$. 
\end{lemma}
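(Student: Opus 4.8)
The plan is to prove both parts by unwinding the definitions of the $\sigma$-fields and matching up their generators; no analytic input is needed. The single observation that drives both claims is that the filtration $\mathcal{G}_x = \sigma\{N(k), \mathds{1}_{\{T \le t\}} : 1 \le k \le x, 0 \le t \le G-1\}$ already carries the complete information about the foundation year. Indeed, the indicators $\mathds{1}_{\{T \le t\}}$, $0 \le t \le G-1$, generate precisely $\mathcal{G}_0 = \sigma\{\mathbb{T}_0, \ldots, \mathbb{T}_{G-1}\}$, because $\{T \le t\} = \bigcup_{j=0}^t \mathbb{T}_j$ and conversely $\mathbb{T}_j = \{T \le j\} \setminus \{T \le j-1\}$. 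Hence $\mathcal{G}_0 \subseteq \mathcal{G}_x$ for every $x$, a fact I would record once and use in both parts.

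For part (i), I would argue directly from the description \eqref{gtdef} of $\mathcal{G}_T$. Take any $\mathbb{G} \in \mathcal{G}_0$. Then $\mathbb{G} \in \mathcal{G}_0 \subseteq \mathcal{G}_x$, and also $\{T \le x\} \in \mathcal{G}_0 \subseteq \mathcal{G}_x$ (it is $T$-measurable, being $\bigcup_{j=0}^{x} \mathbb{T}_j$ for $x \le G-1$ and all of $\Omega$ otherwise). Since $\mathcal{G}_x$ is closed under intersection, $\mathbb{G} \cap \{T \le x\} \in \mathcal{G}_x$ for every $x$, which is exactly the membership condition in \eqref{gtdef}. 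Therefore every $\mathbb{G} \in \mathcal{G}_0$ lies in $\mathcal{G}_T$, i.e. $\mathcal{G}_0 \subseteq \mathcal{G}_T$. (Equivalently one may check the condition only on the generators $\mathbb{T}_j$, splitting into $x \ge j$, where the intersection is $\mathbb{T}_j$, and $x < j$, where it is $\emptyset$.)

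For part (ii), I would invoke the elementary fact that the $\sigma$-field generated by a union of two families of random variables is the join of the $\sigma$-fields each family generates. The generators of $\mathcal{G}_x$ split into the counting values $\{N(k) : 1 \le k \le x\}$ and the foundation indicators $\{\mathds{1}_{\{T \le t\}} : 0 \le t \le G-1\}$; the latter generate $\mathcal{G}_0$ by the first paragraph. For the former I would verify $\sigma\{N(1), \ldots, N(x)\} = \mathcal{F}_x$: the identity $\mathbb{X}_j = \{N(j) = 1\} \cap \{N(j-1) = 0\}$ gives $\mathcal{F}_x \subseteq \sigma\{N(1), \ldots, N(x)\}$ (with $\mathbb{X}_0 = \emptyset$ contributing nothing), while $N(k) = \sum_{j=1}^k \mathds{1}_{\mathbb{X}_j}$ gives the reverse inclusion. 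Combining, $\mathcal{G}_x = \sigma(\{N(k)\} \cup \{\mathds{1}_{\{T \le t\}}\}) = \mathcal{F}_x \vee \mathcal{G}_0$. Neither part poses a genuine obstacle; the only places demanding care are the closure-under-intersection step in (i) and the explicit back-and-forth between the counting process $N$ and the atoms $\mathbb{X}_j$ that identifies $\sigma\{N(1), \ldots, N(x)\}$ with $\mathcal{F}_x$ in (ii). The conceptual payoff — that $\mathcal{G}_0$ is the ``$T$-part'' and $\mathcal{F}_x$ the ``$X$-part'' of the common filtration — is precisely what lets the independence of $X$ and $T$ be exploited in Theorem \ref{comptn}.
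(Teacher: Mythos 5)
Your proof is correct and follows essentially the same route as the paper: both parts come down to observing that the generators of $\mathcal{G}_x$ split into the $N(k)$'s (which generate $\mathcal{F}_x$ via the correspondence $\mathbb{X}_j=\{N(j)=1\}\cap\{N(j-1)=0\}$) and the indicators $\mathds{1}_{\{T\le t\}}$ (which generate $\mathcal{G}_0$), and then verifying the defining condition \eqref{gtdef} of $\mathcal{G}_T$. Your treatment of (i) is in fact slightly cleaner than the paper's, which computes $\mathbb{T}_y\cap\{T\le x\}$ case by case ($x\le y$ versus $x>y$) on the generators and then asserts the generalization: your closure-under-intersection argument applies uniformly to every $\mathbb{G}\in\mathcal{G}_0$, so it also sidesteps the (unstated) appeal to $\mathcal{G}_T$ being a $\sigma$-field that is needed to pass from generators to all of $\mathcal{G}_0$.
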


\begin{proof}
For (i),  recall the definition \eqref{gtdef}. We aim at proving that each (composing) element of $\mathcal{G}_0$ is in  $\mathcal{G}_T$, i.e. that for every $y=0, \ldots, G-1$ it is $\mathbb{T}_y \cap \{T \le x\} \in \mathcal{G}_x \; \forall x$. We note that (see definition of $\mathcal{G}_x$  at the beginning of Section \ref{sec221})
$\mathcal{G}_x=\sigma \{\omega_1, \ldots \omega_{5x}, \mathbb{T}_0, \ldots, \mathbb{T}_4\}$ and distinguish cases (a) $x \le y$ and (b) $x > y$. Consider as examples (a) $y=3$, $x=2$ and (b)  $x=3$, $y=2$. For (a) it is $\mathbb{T}_3  \cap  \{T \le 2\} = \emptyset  \in \mathcal{G}_2$ because $\{\omega: T(\omega) \le 2\}  =  \mathbb{T}_0 \cup \mathbb{T}_1 \cup \mathbb{T}_2$ (\text{strictly it is $T(\omega)$ the second coordinate of $(X,T)'(\omega)$}). 
For (b) see that $\mathbb{T}_2  \cap  \{T \le 3\}  =  \mathbb{T}_2 \in \mathcal{G}_3$. The examples easily generalize. For (ii), note that $\mathcal{F}_x=\sigma \{\mathbb{X}_1, \ldots, \mathbb{X}_x\}=\sigma \{N(k); k=0, \ldots, x\}$ and $\mathcal{G}_0$ as in \eqref{defg0}. Therefore it is  $\mathcal{G}_0=\sigma\{\mathds{1}_{\{T \le k\}}; k=0, \ldots G-1\}$. \qed
\end{proof}

Using the definition and Lemma \ref{g0gt} it is 
\begin{equation} \label{zertgx}
	\leftindex_T{\mathcal{G}}_x= \mathcal{G}_x \vee \mathcal{G}_T=\mathcal{G}_T \vee \mathcal{G}_0 \vee \mathcal{F}_x = \mathcal{G}_T \vee \mathcal{F}_x.
\end{equation}

We see that $\leftindex_T{\mathcal{G}}_x$ is finer than $\mathcal{G}_T$ and - with its definition 
\eqref{gtdef} - all information for any enterprise up to age $T$ is available. This is not the case e.g. for the top path in Figure \ref{model}, where the information about its foundation date has been lost and also e.g. the information $\mathcal{F}_1$, whether the enterprise is closed or not after one year, is not an observed experiment. A coarser filtration is necessary to describe the observed history. 
The information starting from $T$ of the process $\leftindex_TN$ (until $x$) is given by   
\[
 \sigma\{\leftindex_TN(k): T < k \leq x\}:= \sigma\{\leftindex_TN(k\vee T): 1 \le k \leq x\}.
\]
This extends to the generation of a $\sigma$-field by two counting processes. The $\sigma$-field $\leftindex_T {\mathcal{F}}_x: = \sigma \{\leftindex_TN(k), \leftindex_TY(k): T \leq k \leq x\}$ is observed because $\leftindex_TN$ and $\leftindex_TY$ are observed. Whereas
$\leftindex_T{\mathcal{G}}_x$ contains the answer to any  possible question about an enterprise - for the present and the past - at the age of $x \vee T$, $\leftindex_T {\mathcal{F}}_x$ only contains answers to some questions about the enterprise at the same age. That means $\leftindex_T {\mathcal{F}}_x$ is coarser than $\leftindex_T {\mathcal{G}}_x$.

\begin{lemma} \label{sammelfilt}
	For any $x \in \mathbb{N}_0$ it is $\leftindex_T {\mathcal{F}}_x \subseteq 	\leftindex_T{\mathcal{G}}_x$. 
\end{lemma}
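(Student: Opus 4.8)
The plan is to establish the inclusion generator by generator: since $\leftindex_T{\mathcal{F}}_x = \sigma\{\leftindex_T{N}(k), \leftindex_T{Y}(k): T \le k \le x\}$, which under the stated convention reads $\leftindex_T{\mathcal{F}}_x = \sigma\{\leftindex_T{N}(k \vee T), \leftindex_T{Y}(k \vee T): 1 \le k \le x\}$, it is enough to verify that every random variable $\leftindex_T{N}(k \vee T)$ and $\leftindex_T{Y}(k \vee T)$ with $1 \le k \le x$ is measurable with respect to $\leftindex_T{\mathcal{G}}_x$. Throughout I would work with the decomposition $\leftindex_T{\mathcal{G}}_x = \mathcal{G}_T \vee \mathcal{F}_x$ from \eqref{zertgx}, so that the goal becomes to write each generator in terms of quantities that are separately $\mathcal{G}_T$-measurable or $\mathcal{F}_x$-measurable.

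First I would collect the pieces available in each factor. On the $\mathcal{G}_T$ side, Lemma \ref{g0gt}(i) together with the identity $\mathcal{G}_0 = \sigma\{\mathds{1}_{\{T \le k\}}: k = 0, \ldots, G-1\}$ from its proof shows that $T$ is $\mathcal{G}_T$-measurable, whence the indicators $\mathds{1}_{\{k \ge T\}}$ and $\mathds{1}_{\{k < T\}}$ are $\mathcal{G}_T$-measurable for each fixed $k$. On the $\mathcal{F}_x$ side, because $N(k) = \mathds{1}_{\{X \le k\}}$ and $Y(k) = 1 - N(k)$, both $N(k)$ and $Y(k)$ are $\mathcal{F}_x = \sigma\{N(m): 0 \le m \le x\}$-measurable as soon as $k \le x$.

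Next I would evaluate the two generators by splitting on $\{k \ge T\}$ and $\{k < T\}$ and using $(k \vee T) \wedge T = T$. This gives the closed forms $\leftindex_T{N}(k \vee T) = \mathds{1}_{\{k \ge T\}}\bigl(N(k) - N(T)\bigr)$ and $\leftindex_T{Y}(k \vee T) = \mathds{1}_{\{k \ge T\}}Y(k) + \mathds{1}_{\{k < T\}}Y(T)$. For $k \le x$ the factors $N(k)$ and $Y(k)$ are already $\mathcal{F}_x$-measurable and the indicators are $\mathcal{G}_T$-measurable, so everything reduces to the stopped values $N(T)$ and $Y(T) = 1 - N(T)$.

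The main obstacle is precisely this last point, namely that the value of the adapted process $N$ sampled at the stopping time $T$ is $\mathcal{G}_T$-measurable. I would dispatch it with the standard stopping-time argument: for each level $c$ and each $x$ one has $\{N(T) \le c\} \cap \{T \le x\} = \bigcup_{j \le x}\bigl(\{N(j) \le c\} \cap \{T = j\}\bigr)$, and since $\{N(j) \le c\} \in \mathcal{F}_j \subseteq \mathcal{G}_j$ and $\{T = j\} \in \mathcal{G}_j$ for every $j \le x$, this set lies in $\mathcal{G}_x$; by the definition \eqref{gtdef} of $\mathcal{G}_T$ this is exactly what it means for $N(T)$, hence $Y(T)$, to be $\mathcal{G}_T$-measurable. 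Assembling the three facts shows each generator is $\mathcal{G}_T \vee \mathcal{F}_x = \leftindex_T{\mathcal{G}}_x$-measurable, yielding $\leftindex_T{\mathcal{F}}_x \subseteq \leftindex_T{\mathcal{G}}_x$.
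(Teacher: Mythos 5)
Your proof is correct, and it follows the same overall strategy as the paper's: reduce the inclusion to generator-by-generator measurability and exploit that $\leftindex_T{\mathcal{G}}_x$ contains both $\mathcal{F}_x$ and $\mathcal{G}_T$. The difference lies in how the generators are decomposed. The paper works with $\leftindex_T{N}$ and $\leftindex_T{Y}$ at a deterministic index and factors out the truncation information, writing $\leftindex_T{N}(x)=\mathds{1}_{\{T+1\le X\}}\,N(x)$ and $\leftindex_T{Y}(x)=\mathds{1}_{\{T\le x\}}\,Y(x)$, so that each factor is measurable either because $\{T+1\le X\}\in\mathcal{G}_T$ or because $T$ is a $\mathcal{G}_x$-stopping time; it never needs the value of a process sampled at $T$. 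You instead take the definition $\leftindex_T{\mathcal{F}}_x=\sigma\{\leftindex_T{N}(k\vee T),\leftindex_T{Y}(k\vee T):1\le k\le x\}$ literally, which forces you to handle the stopped values $N(T)$ and $Y(T)$, and you supply the standard argument that an adapted process evaluated at a stopping time is $\mathcal{G}_T$-measurable. Your route is slightly longer but more faithful to the stated definition of the observed filtration — in particular it explicitly covers the generators with $k<T$, where $\leftindex_T{Y}(k\vee T)=Y(T)$ genuinely depends on the stopped value, a case the paper's proof passes over silently by only discussing $\leftindex_T{Y}$ at deterministic arguments. Both proofs are valid; the paper's is more economical, yours is more complete on this point.
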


\begin{proof}
$N$ is adapted to $\{\mathcal{F}_x: x \in \mathbb{N}_0\}$ by definition and $Y$ is also adapted the $\{\mathcal{F}_x\}$ because $Y(x)=0 \Leftrightarrow N(x)=1$ and vice versa. Because of Lemma \ref{g0gt}(ii) and \eqref{zertgx} it is $\mathcal{F}_x \subset \mathcal{G}_x \subseteq \leftindex_T{\mathcal{G}}_x$, for all $x \in \mathbb{N}_0$, so that $N$ and $Y$ are also adapted to $\leftindex_T{\mathcal{G}}_x$. 
Furthermore follows by definition that $\leftindex_TN(x)=\mathds{1}_{\{T+1 \leq X \}} N(x)$.

Now $\{T+1 \leq X\}$ is measurable with respect to $\leftindex_T{\mathcal{G}}_x$ for all $x \in \mathbb{N}_0$ because it is an event that occurs until $T$ and is hence in $\mathcal{G}_T$ (see \eqref{gtdef} and comment shortly before) and finally it follows by \eqref{zertgx}. The process $\mathds{1}_{\{T+1 \leq X \}}$ is then adapted to $\leftindex_T{\mathcal{G}}_x$ (due to completion with the complement). Finally the product of two measurable (real-valued) functions is measurable \cite[see e.g.][Theorem 13.3]{Billing}.  

Furthermore $\{T\leq x\} \in \mathcal{G}_x \subseteq \leftindex_T{\mathcal{G}}_x$ because $T$ is a $\mathcal{G}_x$-stopping time and by  \eqref{zertgx}. Again $\mathds{1}_{\{T \leq x\}}$ is measurable with respect to $\leftindex_T{\mathcal{G}}_x$. So that $\leftindex_TY(x)= \mathds{1}_{\{T \le x\}}   Y(x)$ as product of two $\leftindex_T{\mathcal{G}}_x$-adapted processes is itself $\leftindex_T{\mathcal{G}}_x$-adapted.

 $_TY$ and $_TN$ are adapted to $\{\leftindex_T {\mathcal{G}}_x\}$, So both domains of $_TY(x)$ and  are in  $\leftindex_T {\mathcal{G}}_x$ (as well as their intersections and unions), whereas $\leftindex_T {\mathcal{F}}_x$ contains by definition the minimal set of domains of  $(_TN(x), _TY(x))$. \qed
\end{proof}

Now by the definition of the conditional expectation $E(Z\vert \mathcal{G})$ to satisfy $\int_G E(Z\vert \mathcal{G}) dP = \int_G Z d P \; \forall G \in \mathcal{G}$ \cite[see][Section 34]{Billing} it is
for $_TM:=_TN - \leftindex_T{A}$ and $\mu < \nu$:
\[
\int_{\mathbb{G}} E_{\theta}^{\mathbb{T}}( \leftindex_T{M}(\nu)\vert \leftindex_T{\mathcal{G}}_{\mu}) d \tilde{P}_{\theta}^{\mathbb{T}}= \int_{\mathbb{G}}  \leftindex_T{M}(\nu) d \tilde{P}_{\theta}^{\mathbb{T}} \quad \forall \mathbb{G} \in \leftindex_T{\mathcal{G}}_{\mu}
\]
by Theorem \ref{comptn} (recall definition $E_{\theta}^{\mathbb{T}}(Z)= \int Z d \tilde{P}_{\theta}^{\mathbb{T}}$.). Because of Lemma \ref{sammelfilt} the same is true $\forall \mathbb{G} \in \leftindex_T{\mathcal{F}}_{\mu}$, so that $_TM$ is also a martingale with respect to the observable filtration $\leftindex_T{\mathcal{F}}_x$. (The other conditions are clear by the finite-ness of $_TN$ and $_TA$.) This result is sometimes referred to as innovation theorem, and now
allows to interpret $\Delta \leftindex_TA(x)$ as observed probability of the Bernoulli experiment $\Delta \leftindex_TN(x)$. 

On the one hand $\sigma$-field $\leftindex_T {\mathcal{F}}_x$ does inevitably not contain information before $T$ because e.g. for an enterprise founded in 2013, i.e. with $T=4$, the question whether the enterprise is still active at the age of $x=1$ and closed at the age of $x=2$ could not be answered (see e.g. top path in Figure \ref{model}). On the other hand is deliberately  does not contain  information before $T$, even though it could have been answered for an enterprise surviving 2018 (see e.g. middle path in Figure \ref{model}). The use of doing so will become clear soon now when we aim to mimic \eqref{likecontr2} (second line) to derive a likelihood contribution.

\subsection{Likelihood contribution with condition $\mathds{1}_{\{T+1 \le X\}}$} \label{defclc}

For a contribution of enterprise $i$ to the conditional likelihood \citep[see again][Definition 7.2(ii)]{gourieroux1995}, we need to formulate $\leftindex_TL(\theta)$ for the both cases $\mathds{1}_{\{t+1 \le \mathbf{x}\}} \in \{0,1\}$ as in the proof of Lemma \ref{lemmaeinfach}.  If an enterprise is not observable $\mathds{1}_{\{t+1 \le \mathbf{x}\}}=0$, it is obviously $\leftindex_TL(\theta)=1$ because $\leftindex_TN \equiv \leftindex_TY \equiv 0$. Consider the opposite from now on. The Markov property again holds, and factorizing the pdf of $_TN$, as in Lemma \ref{lemmaeinfach}, now needs to take account of $\leftindex_TY$. Of course we can replace in $(\leftindex_TN, \leftindex_TY)$ the vector $\leftindex_TN$ by $\Delta \leftindex_TN$. With the notation 
\begin{equation}
	\leftindex_TJ:=\begin{pmatrix}
		\Delta \leftindex_TN(T+1) & \Delta \leftindex_TN(T+2) & \ldots & \Delta \leftindex_TN(\chi)  \\ 
		\leftindex_TY(T) &  \leftindex_TY(T+1) & \ldots & \leftindex_TY(\chi-1) 
	\end{pmatrix}, \label{matrixmarginal}
\end{equation}
the conditional likelihood contribution becomes the conditional pdf 
\[
f^{\leftindex_TJ \vert \mathds{1}_{\{T+1 \le X\}}}(\leftindex_Tj \vert \mathds{1}_{\{t+1 \le \mathbf{x}\}}),
\] 
where small letters (as in Lemma \ref{lemmaeinfach}) symbolize the outcome or realization (whereas the left lower capital index $T$ is a general symbol for the left-truncation design). We may order the elements of $_TJ$ differently, 
 so that the conditional pdf becomes
\begin{multline*}
f^{\leftindex_T\Delta N(\chi) \vert	\leftindex_T\Delta N(\chi-1), \ldots, 	\leftindex_T\Delta N(T+1), \leftindex_TY(\chi-1), \ldots, \leftindex_TY(T), \mathds{1}_{\{T+1 \le X\}}} \cdot \\
 \cdot f^{\leftindex_TY(\chi-1) \vert \leftindex_T\Delta N(\chi-1), \ldots, 	\leftindex_T\Delta N(T+1), \leftindex_TY(\chi-2), \ldots, \leftindex_TY(T), \mathds{1}_{\{T+1 \le X\}} } \cdot \\
 \cdot  f^{	\leftindex_T\Delta N(\chi-1) \vert	\leftindex_T\Delta N(\chi-2), \ldots, 	\leftindex_T\Delta N(T+1), \leftindex_TY(\chi-2), \ldots, \leftindex_TY(T), \mathds{1}_{\{T+1 \le X\}}}  \cdot  \\ 
 \ldots \cdot f^{\Delta \leftindex_TN(T+1) \vert \leftindex_TY(T), \mathds{1}_{\{T+1 \le X\}}} \cdot 
	f^{\leftindex_TY(T) \vert  \mathds{1}_{\{T+1 \le X\}}}.
\end{multline*}
(Note that the last formula could contribute to a full likelihood by the additional factor $f^{\mathds{1}_{\{T+1 \le X\}}}=\tilde{P}_{\theta}(T+1 \le X)$, the observation probability. However then, the specification of the distribution of $T$ would become necessary and the semi-parametric model must be abandoned.) We separate factors starting with $ _TY$ and $\Delta _TN$. For a factor starting with $_TY(x)$ and for $x \in \{T+1, \ldots, \chi-1\}$ it is:
\begin{multline*}
	f^{\leftindex_TY(x) \vert \leftindex_T\Delta N(x), \ldots, 	\leftindex_T\Delta N(T+1), \leftindex_TY(x-1), \ldots, \leftindex_TY(T), \mathds{1}_{\{T+1 \le X\}} } \\ \bigl(\leftindex_Ty(x) \vert \Delta 	\leftindex_Tn(x), \ldots, \Delta 	\leftindex_Tn(T+1),  \leftindex_Ty(x-1), \ldots, \leftindex_Ty(T), 1\bigr)\\
	 \stackrel{(i)}{=} f^{\leftindex_TY(x) \vert \leftindex_T\Delta N(x), \leftindex_TY(x-1) }   \bigl(\leftindex_Ty(x) \vert \Delta \leftindex_Tn(x), \leftindex_Ty(x-1) \bigr) \stackrel{(ii)}{=}1
\end{multline*}
Equality (i) is due to the Markovian property. For (ii) note that if $\leftindex_TY(x-1)=0$, then directly $\leftindex_TY(x)=0$. And furthermore results from $\leftindex_TY(x-1)=1$ and $\leftindex_TN(x)=0$ directly $\leftindex_TY(x)=1$. Also imply $\leftindex_TY(x-1)=1$ and $\leftindex_TN(x)=1$ that  $\leftindex_TY(x)=0$. The value of $\leftindex_TY(x)$ can hence be uniquely inferred from  $\leftindex_TY(x-1)$ and $\leftindex_TN(x)$ as well as $\leftindex_TY(x)$ is uniquely determined by the outcome of $\Delta \leftindex_TN(x)$ and $\leftindex_TY(x-1)$. (Note that $(\leftindex_Ty(x),\Delta \leftindex_Tn(x), \leftindex_Ty(x-1))$ is not defined over the entire $\{0,1\} \times \{0,1\} \times \{0,1\}$.) For the case $x=T$ it is
$f^{\leftindex_TY(T) \vert \mathds{1}_{\{T+1 \le X\}}}(\leftindex_Ty(T) \vert 1 )=1$.

For the factors starting with $\Delta _TN$, and for all $x \in \{T+1, \ldots, \chi\}$, the information about $\Delta \leftindex_TN(x-1), \ldots, \leftindex_T\Delta N(1)$ and  $\Delta \leftindex_TY(x-1), \ldots, \leftindex_T\Delta Y(1)$ is contained in $\leftindex_T{\mathcal{F}}_{x-1}$: 
\begin{multline*}
	f^{	\leftindex_T\Delta N(x) \vert	\leftindex_T\Delta N(x-1), \ldots, 	\leftindex_T\Delta N(T+1), \leftindex_TY(x-1), \ldots, \leftindex_TY(T),  \mathds{1}_{\{T+1 \le X\}} }  \\ \bigl(\Delta 	\leftindex_Tn(x) \vert \Delta 	\leftindex_Tn(x-1), \ldots, \Delta 	\leftindex_Tn(t+1),  \leftindex_Ty(x-1), \ldots, \leftindex_Ty(t), 1 \bigr) \\
	= \tilde{P}_{\theta}^{\mathbb{T} }\{\Delta \leftindex_TN(x)=1 \vert \leftindex_T{\mathcal{F}}_{x-1}\}^{\Delta \leftindex_Tn(x)} \cdot \tilde{P}_{\theta}^{\mathbb{T} }\{\Delta \leftindex_TN(x)=0 \vert \leftindex_T{\mathcal{F}}_{x-1}\}^{1- \Delta \leftindex_Tn(x)} \\
	= \tilde{P}_{\theta}^{\mathbb{T} }\{\Delta \leftindex_TN(x)=1 \vert \leftindex_T{\mathcal{F}}_{x-1}\}^{\Delta \leftindex_Tn(x)} \cdot \bigl(1- \tilde{P}_{\theta}^{\mathbb{T}}\{\Delta \leftindex_TN(x)=1 \vert \leftindex_T{\mathcal{F}}_{x-1}\}^{1- \Delta \leftindex_Tn(x)} \bigr)   \\
	= \bigl(\mathrm{E}_\theta^{\mathbb{T} }\{\Delta \leftindex_TN(x) \vert \leftindex_T{\mathcal{F}}_{x-1}\} \bigr)^{\Delta \leftindex_Tn(x)} \bigl(1- \mathrm{E}_\theta^{\mathbb{T} }\{\Delta \leftindex_TN(x) \vert \leftindex_T{\mathcal{F}}_{x-1}\}\bigr)^{1-\Delta \leftindex_Tn(x)}
\end{multline*}
 For the last equality see comment (II) shortly before \eqref{defmart}. 
The process $\leftindex_TA$ is by Lemma \ref{sammelfilt} the $(\leftindex_T{\mathcal{F}}_x,\tilde{P}_\theta^{\mathbb{T}})$-compensator of $\leftindex_TN$. Furthermore, note that, for an $x \in \{T+1, \ldots, \chi\}$, such that $\leftindex_TN(x)=1$, follows $\Delta \leftindex_TN(k)=1$ for some $k \in \{T+1, \ldots, x\}$. And $\Delta \leftindex_TN(x)=1$ implies that $\leftindex_TN(x)=1$ and  $\leftindex_TN(x-1)=0$.
By evaluatiom at the random variables $_TN$ and $_TY$, the conditional likelihood is:
\begin{eqnarray*}
	\leftindex_TL(\theta) & = & \prod_{x=T+1}^\chi \bigl\{ (1- \Delta \leftindex_T A(x, \theta) )^{1-  \Delta \leftindex_TN(x)} (\Delta \leftindex_TA(x, \theta))^{\Delta \leftindex_TN(x)} \bigr\} \nonumber \\ 
	& = & \Bigl[\prod_{x=T +1}^{X-1} (1-\theta)^1 \theta^0 \Bigr] \cdot (1-\theta)^0 \theta^1 \cdot \Bigl[\prod_{x \ge X} 1^1 0^0 \Bigr] = \theta (1-\theta)^{(X-1)-T}. \label{LikeTrunkiert}
\end{eqnarray*}
The second equality requires $\chi$ to be larger than $X$. Note further that $\Delta \leftindex_TN(x) =   N(x)- N(x \wedge T) - N(x-1)+N((x-1) \wedge T) = \mathds{1}_{\{T \le x-1\}} \mathds{1}_{\{x=X\}}$ and  $\Delta \leftindex_TA(x, \theta) =  \theta \mathds{1}_{\{T\le  x-1 \leq X-1\}}$.

\subsection{Adjustment for right-censoring} \label{adfrc}

We proceed as in Section \ref{ltrcfixed} and define a left-truncated and right-censored counting process as $\leftindex_TN^c(x)  = \sum_{k=1}^{x } C(k) \Delta \leftindex_TN(k) 
	= \mathds{1}_{\{T< X \leq x \wedge  (T+2)\}}$ 
with compensator ($	\leftindex_TY^c(x):=C(x+1)\leftindex_TY(x)=\mathds{1}_{\{x+1 \leq T+2\}} \mathds{1}_{\{x \geq T\}} \mathds{1}_{\{X \geq x+1\}} $ and redefined $C(x):=\mathds{1}_{\{x \le T+2\}}$, we suppress the filtration here, but note that with $T$, $T+2$ is also a stopping time):
	\begin{eqnarray*}
	\leftindex_TA^c(x, \theta) & = & \sum_{k=1}^{x} C(k) \Delta \leftindex_TA(k, \theta) 
	= \theta \sum_{k=1}^x \leftindex_TY^c(k-1) \\
	& = & \theta \sum_{k=1}^x \mathds{1}_{\{k \leq T+2\}} \mathds{1}_{\{k-1 \geq T\}} \mathds{1}_{\{X \geq k\}} 
	=  \theta  \sum_{k=1}^x  \mathds{1}_{\{T< k \leq X \wedge (T+2)\}} 
\end{eqnarray*}

So that finally (practically $\chi$ can be $T+2$, theoretically it is not considered here) $\leftindex_TL^c(\theta)=1$ if $X \le T$ and else:
\begin{eqnarray} \label{likcont}
 &  & \prod_{T+1}^{T+2}\bigl\{ (1- \Delta \leftindex_T A^c(x, \theta) )^{1-  \Delta \leftindex_TN^c(x)} (\Delta \leftindex_TA^c(x, \theta))^{\Delta \leftindex_TN^c(x)} \bigr\} \nonumber \\
	& = & \Bigl[\prod_{x= T +1}^{(X-1) \wedge (T+2)} (1-\theta)^1 \theta^0 \Bigr]  \Bigl[(1-\theta)^0 \theta^1 \Bigr]^{\mathds{1}_{\{X \leq T+2\}}} \Bigl[\prod_{x \ge ((X-1) \wedge (T+2))+1} 1^1 0^0 \Bigr]  \nonumber \\
	&= & \theta^{\mathds{1}_{\{X \leq T+2\}}} (1-\theta)^{((X-1) \wedge (T+2))-T} 
\end{eqnarray}
because
$\Delta \leftindex_TN^c(x)  =  \mathds{1}_{\{T-1 \le x\leq T+2\}} \mathds{1}_{\{x=X\}}$ and $\Delta \leftindex_TA^c(x, \theta)  =  \theta \mathds{1}_{\{T< x \leq X \wedge (T+2)\}}$.

\section{Asymptotic normality and standard error} \label{appsec4}

Our overall aim is an approximate confidence interval for $\theta$, and the first of two assumptions - a restriction for the true parameter - is standard, and practically without consequences. 
\vspace*{0.3cm}
\begin{enumerate}[label=\AnnANumm]
	\setcounter{enumi}{2}
	\item \label{pararaumeinfach} It is for the true parameter $\theta_0 \in \Theta:=[\varepsilon;1-\varepsilon]$ for some small $\varepsilon \in (0;\frac{1}{2})$.
\end{enumerate}
\vspace*{0.3cm}

As usual, we assume the data to be the truncated part of a simple randon sample (SRS) drawn from the population \cite[see e.g.][]{And0,efron1999, weiswied2021,topaweis2024}. 
\vspace*{0.3cm}
\begin{enumerate}[label=\AnnANumm]
	\addtocounter{enumi}{3}
	\item\label{A3:Ind} $(X_i,T_i)'$, $i=1,\ldots,n, n \in \mathbb{N}$, is an SRS, i.e. $\mbox{i.i.d.}$ random variables. 
\end{enumerate}
\vspace*{0.3cm}

We perform \citep[see again][Definition 7.2(ii)]{gourieroux1995} a maximization of the conditional likelihood  and attach for each enterprise $i$ (observed or not) the index to $\leftindex_TN^c$, $\leftindex_TY^c$ and finally to the conditional probability distribution function \eqref{likcont}. The probability distribution of 
\[
((_TN^c_1,\leftindex_TY^c_1)', \ldots, (_TN^c_n,\leftindex_TY^c_n)'),
\] 
conditioning on $(\mathds{1}_{\{T_1+1 \le X_1\}}, \ldots, \mathds{1}_{\{T_n+1 \le X_n\}})$, is now the product of the conditional pdf's due to the independence in Assumption \ref{A3:Ind}:
\begin{equation} \label{defclike}
\leftindex_TL^c(data \vert \theta):= \prod_{i=1}^n 	\leftindex_TL_i^c(\theta)
\end{equation}

Other studies (similar in design but not equal to that in Section \ref{sec11}), may not have an observation period of two (but $s$) years and the population may not encompass five (but $G$) years. The coming results are  given in the respective form. Note also that the observable support of the $X$ is the maximum of the $\chi_i$'s (from Section \ref{defclc}) and results in $s+G-1$ years. 

We now study the maximum of the logarithmic conditional likelihood \eqref{defclike}, i.e. the sum of the $\log \leftindex_TL_i^c(\theta)$'s  (including zeros for the unobserved enterprises). We define $\leftindex_TN_\cdot^c(x):=\sum_{i=1}^n \leftindex_TN_i^c(x)$ and $\leftindex_TY_{\cdot}^c(x):= \sum_{i=1}^n   \leftindex_TY_i^c(x)$.

A first representation of the ratio that estimates $\theta_0$, collected in Lemma \ref{3rep}, follows directly from equating the derivative $\log \leftindex_TL^c(data \vert \theta)$ (using the first line of \eqref{likcont}) and solving for $\theta$. The second representation in Lemma \ref{3rep}, useful for computation, results from: 
\begin{equation} \label{nyexp}
	\begin{split}
		\sum_{i=1}^n \sum_{x=1}^{s+G-1} \Delta \leftindex_TN_i^c(x) & =  \sum_{i=1}^n  \sum_{x=1}^{s+G-1} \mathds{1}_{\{T_i < x \leq T_i+s\}} \mathds{1}_{\{x
			=X_i\}} =\sum_{i=1}^n  \mathds{1}_{\{T_i < X_i \leq T_i+s\}} \\
		\sum_{k=1}^x \leftindex_TY_i^c(k-1) & = \mathds{1}_{\{T_i< (X_i \wedge x))\}} \bigl[(X_i \wedge (T_i+s) \wedge x)-T_i \bigr] \\ 		
		\sum_{i=1}^n \sum_{x=1}^{s+G-1} \leftindex_TY_i^c(x-1) & = \sum_{i=1}^n  \mathds{1}_{\{T_i< X_i\}} \bigl[(X_i \wedge (T_i+s))-T_i \bigr]
	\end{split}
\end{equation}
(The second line here follows from the definition in Section \ref{adfrc}.)
The third representation, as a martingale transform, allows the asymptotic analysis. 
\begin{lemma} \label{3rep}
By maximization of the conditional likelihood \eqref{defclike} results for the point estimator $\hat{\theta}_n$ and is distance to the true parameter $\theta_0$:  
\begin{eqnarray*}
	\hat{\theta}_n & = &  \frac{\sum_{x=1}^{s+G-1} \Delta \leftindex_TN_{\cdot}^c(x)}{ \sum_{x=1}^{s+G-1} \leftindex_TY_{\cdot}^c(x-1)}  
	 =  \frac{\sum_{i=1}^n \mathds{1}_{\{T_i \le X_i-1 \leq T_i+1\}}}{\sum_{i=1}^n \mathds{1}_{\{T_i \le X_i-1\}} \bigl[   (X_i \wedge (T_i+2))-T_i \bigr]}  \\
		\sqrt{n}(\hat{\theta}_n & - & \theta_0)  =    \frac{1}{\sqrt{n}} \frac{\sum_{x=1}^{s+G-1} \bigl(\Delta \leftindex_TN_\cdot^c(x) - \theta_0 \leftindex_TY_\cdot^c(x-1) \bigr)}{\frac{1}{n} \sum_{x=1}^{s+G-1} \leftindex_TY_\cdot^c(x-1)}
\end{eqnarray*}
\end{lemma}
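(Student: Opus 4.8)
The plan is to start from the logarithm of the conditional likelihood \eqref{defclike}, inserting for each enterprise the product form in the first line of \eqref{likcont}. Since $\Delta \leftindex_TA_i^c(x,\theta)=\theta\,\leftindex_TY_i^c(x-1)$ and $\Delta\leftindex_TN_i^c(x)\le\leftindex_TY_i^c(x-1)$ pointwise, every factor with $\leftindex_TY_i^c(x-1)=0$ reduces to $1^1\cdot 0^0=1$ under the convention $0^0:=1$ and contributes nothing to the log-sum; the surviving terms are exactly those with the enterprise observably at risk. First I would collect these: writing $D:=\sum_{x}\Delta\leftindex_TN_\cdot^c(x)$ for the total count of observed closures and $R:=\sum_{x}\leftindex_TY_\cdot^c(x-1)$ for the total observable time at risk, the whole log-likelihood collapses to the single Bernoulli-type expression $D\log\theta+(R-D)\log(1-\theta)$. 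Equating its $\theta$-derivative $D/\theta-(R-D)/(1-\theta)$ to zero and solving gives $\hat\theta_n=D/R$, the first representation; checking that the second derivative $-D/\theta^2-(R-D)/(1-\theta)^2$ is negative confirms this is the maximizer, and the unobserved enterprises, for which $\leftindex_TL_i^c\equiv1$, drop out automatically.

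For the second, computational representation I would substitute the closed forms gathered in \eqref{nyexp}. The numerator becomes $\sum_i\mathds{1}_{\{T_i<X_i\le T_i+s\}}$ and the denominator $\sum_i\mathds{1}_{\{T_i<X_i\}}\bigl[(X_i\wedge(T_i+s))-T_i\bigr]$; specialising to $s=2$ and using that for integer-valued $X_i,T_i$ one has $T_i<X_i\Leftrightarrow T_i\le X_i-1$ and $X_i\le T_i+2\Leftrightarrow X_i-1\le T_i+1$ rewrites these into the stated form. This step is pure index bookkeeping, the only mild subtlety being the interplay between the wedge $\wedge(T_i+2)$ and the shift $X_i-1$.

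For the asymptotic representation I would rearrange algebraically, $\hat\theta_n-\theta_0=(D-\theta_0R)/R$, and note that $\theta_0R=\theta_0\sum_x\leftindex_TY_\cdot^c(x-1)=\sum_x\Delta\leftindex_TA_\cdot^c(x,\theta_0)$, where $\leftindex_TA_\cdot^c:=\sum_i\leftindex_TA_i^c$ is the aggregated censored compensator evaluated at the true parameter. Hence $D-\theta_0R=\sum_x\bigl(\Delta\leftindex_TN_\cdot^c(x)-\theta_0\leftindex_TY_\cdot^c(x-1)\bigr)$ is exactly the sum of increments of the aggregated martingale $\leftindex_TN_\cdot^c-\leftindex_TA_\cdot^c$, which is a martingale under $\tilde P_{\theta_0}^{\mathbb{T}}$ by the censored analogue of Theorem \ref{comptn} together with Lemma \ref{sammelfilt}. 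Multiplying numerator and denominator by $\sqrt n$ and $n$ respectively then yields the displayed formula for $\sqrt n(\hat\theta_n-\theta_0)$.

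I expect no genuine obstacle here: the result is essentially algebraic once the log-likelihood has collapsed. The two points demanding care are the degenerate factors — making sure that every $x$ outside an enterprise's observable risk window genuinely contributes $1$, so that the two-term form is exact and no spurious terms survive — and the recognition, in the final display, of the scaled numerator as a martingale-transform increment sum. This last identification is the whole point of the representation, since it is what licenses the martingale central limit theorem used in the subsequent asymptotics rather than a classical Lindeberg--L\'evy argument.
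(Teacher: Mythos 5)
Your proposal is correct and follows the same route the paper sketches: maximize the log of the conditional likelihood built from the first line of \eqref{likcont} (which collapses to the Bernoulli-type form $D\log\theta+(R-D)\log(1-\theta)$ once the $\leftindex_TY_i^c(x-1)=0$ factors degenerate to one), obtain the second representation via the identities in \eqref{nyexp} with $s=2$, and recognize the centred numerator $D-\theta_0R$ as the increment sum of the aggregated compensated process for the final display. No gaps.
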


As an early example of the analysis from truncated data, \cite{heckman1976} did not distinguish between the enumeration of units in the (latent) sample and of the observations. (The observed units are thought to be sorted to the beginning of the sample.) For clearness we now enumerate the observations with a new index $j$. Also let us denote the realized lifespans and truncation ages of observed enterprises as $x^{\text{obs}}$ and $t^{\text{obs}}$, respectively, and that part of the lifespan in the study period as $d^{\text{obs}}_j:=x^{\text{obs}}_j - t^{\text{obs}}_j$. The number of (realized) observations, $m:=\sum_{i=1}^n \mathds{1}_{\{\mathbf{x}_i -1 \ge t_i\}}$, can now be split into $m_{uncens}:=\sum_{i=1}^n \mathds{1}_{\{t_i \le \mathbf{x}_i -1 \le t_i+2\}}$ uncensored and $m_{cens}:=\sum_{i=1}^n \mathds{1}_{\{\mathbf{x}_i+1-1 \ge t_i+s+1 \}}$ censored observation. A practical version of the point estimate in Lemma \ref{3rep}, omitting the subscript $n$ to avoid confusion, now is 
\begin{equation} \label{thetaprac}
	\hat{\theta}= \frac{m_{uncens}}{\sum_{j=1}^{m_{uncens}} d^{\text{obs}}_j + s m_{cens}}.
\end{equation}

 Coming back to theory, consider the numerator in the last expression of Lemma \ref{3rep} and define 
$\leftindex_TM_\cdot^c(x):= \leftindex_TN_\cdot^c(x) - \theta_0 \sum_{k=1}^{x-1}\leftindex_TY_\cdot^c(k)$. The martingale differences
\[
\leftindex^{PB}Y_{nx} := \frac{1}{\sqrt{n}} \frac{1}{\theta_0(1-\theta_0)} \Delta \leftindex_TM_\cdot^c(x)
\]
have conditional variances $\leftindex^{PB}{\sigma}^2_{nx}:= E_{\theta_0}^{\mathbb{T}}(\leftindex^{PB}Y^2_{nx} \vert \leftindex_T {\mathcal{F}}_{x-1})= \Delta \langle\leftindex_TV_{\cdot}^c \rangle(x)$. (The definition for the last equality will appear in the proof, and note that the expectation is unconditional.) In the notation of \citet[][Formula 35.35]{Billing} (and with proof in Appendix \ref{proofbedingung1}) it is in probability by the law of large numbers  
\begin{equation} \label{bedingung1}
\sum_{x=1}^{s+G-1} \leftindex^{PB}{\sigma}^2_{nx}  \stackrel{n \to \infty}{\longrightarrow}  \leftindex^{PB}{\sigma}^2 := (\theta_0(1-\theta_0))^{-1} \sum_{x=1}^{s+G-1} \mathrm{E}_{\theta_0}\{\leftindex_TY_1^c(x-1)\}.
\end{equation}
Note on the one hand, that Assumption \ref{pararaumeinfach} is needed is this step and on the other hand that $E_{\theta_0}\{\leftindex_TY_1^c(x-1)\}$ can be (infeasibly) estimated by $n^{-1} \sum_{i=1}^n \leftindex_TY_i^c(x-1)$. 

The main obstacle is that the Lindeberg-L\'{e}vy Condition of \citet[][Theorem 35.12]{Billing} is too strong. It can not be proven for the model at hand. A weaker version was derived in the 1970s by R. Rebolledo \cite[][Theorem 5.1]{the-statis:2002} and is easy to be verified
(\cite[][Formula 35.35]{Billing}, namely \eqref{bedingung1}, is exactly \citet[][Theorem 5.1, Condition (a)]{the-statis:2002}.) 
For their Condition (c) note that by Assumption \ref{pararaumeinfach}
$(\sqrt{n}\theta_0(1-\theta_0))^{-1} \stackrel{n \to \infty}{\longrightarrow} 0$
and hence 
\begin{align*}
	\langle \leftindex_TV_{\cdot, \epsilon}^c \rangle(x) &:=  \frac{1}{\theta_0 (1-\theta_0)}  \sum_{k=1}^x \mathds{1}_{\{\vert(\sqrt{n}\theta_0(1-\theta_0))^{-1}\vert > \epsilon \}} \frac{1}{n}\sum_{i=1}^n  \leftindex_TY_i^c(k-1) \\
	&\xrightarrow{p} \frac{1}{\theta_0 (1-\theta_0)}  \sum_{k=1}^x 0 \cdot \mathrm{E}_{\theta_0}\bigl\{\leftindex_TY_i^c(k-1)\bigr\} =0 
\end{align*} 

We can now state consistency and asymptotic normality.
\begin{theorem} \label{martiKT}
	Under Assumptions \ref{Grundmodellxt}-\ref{A3:Ind} with $s \le 2$ and $Z  \sim N(0,1)$ it is
$\sum_{x=1}^{s+G-1} \leftindex^{PB}Y_{nx} \Rightarrow \leftindex^{PB}\sigma Z$, for $n \to \infty$.
\end{theorem}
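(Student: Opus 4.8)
The plan is to read $\sum_{x=1}^{s+G-1}\leftindex^{PB}Y_{nx}$ as the terminal value of a normalized discrete-time martingale and to invoke Rebolledo's central limit theorem \citep[][Theorem 5.1]{the-statis:2002}, whose two hypotheses have in effect already been assembled in the text. First I would confirm the martingale structure. By Theorem \ref{comptn} and Lemma \ref{sammelfilt}, for each enterprise $i$ the process $\leftindex_TN_i^c-\leftindex_TA_i^c$ is a $(\leftindex_T{\mathcal{F}}_x,\tilde{P}_{\theta_0}^{\mathbb{T}})$-martingale; summing over the i.i.d.\ sample of Assumption \ref{A3:Ind} shows that $\leftindex_TM_\cdot^c$ is again such a martingale, so the scaled increments $\leftindex^{PB}Y_{nx}=(\sqrt{n}\theta_0(1-\theta_0))^{-1}\Delta\leftindex_TM_\cdot^c(x)$ form a martingale-difference array over the fixed, finite age horizon $x=1,\ldots,s+G-1$.

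Next I would check Rebolledo's two conditions. Condition (a), convergence of the predictable quadratic variation $\sum_x\leftindex^{PB}{\sigma}^2_{nx}$ to the deterministic limit $\leftindex^{PB}{\sigma}^2$, is exactly \eqref{bedingung1}: since $\leftindex^{PB}{\sigma}^2_{nx}=(n\theta_0(1-\theta_0))^{-1}\leftindex_TY_\cdot^c(x-1)$ and the $\leftindex_TY_i^c(x-1)$ are i.i.d.\ and bounded, the weak law of large numbers gives $n^{-1}\leftindex_TY_\cdot^c(x-1)\to\mathrm{E}_{\theta_0}\{\leftindex_TY_1^c(x-1)\}$ in probability, and summing over the finitely many ages yields the claim. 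Condition (c), the conditional Lindeberg / negligibility condition, follows because each per-enterprise jump $\Delta\leftindex_TM_i^c(x)=\Delta\leftindex_TN_i^c(x)-\theta_0\leftindex_TY_i^c(x-1)$ is bounded by one in modulus, so every jump of the array is bounded by $(\sqrt{n}\theta_0(1-\theta_0))^{-1}$; by Assumption \ref{pararaumeinfach} this bound is strictly positive and tends to zero, forcing the truncated quadratic variation $\langle\leftindex_TV_{\cdot,\epsilon}^c\rangle(x)$ to vanish as computed just above the statement.

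With both conditions in hand, Rebolledo's theorem delivers convergence of the terminal value to a centered Gaussian whose variance is the limit $\leftindex^{PB}{\sigma}^2$ of the predictable quadratic variation, that is $\sum_{x=1}^{s+G-1}\leftindex^{PB}Y_{nx}\Rightarrow\leftindex^{PB}\sigma Z$ with $Z\sim N(0,1)$, as asserted.

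The main obstacle I anticipate is conceptual rather than computational: this is a martingale limit theorem in which the number of time steps $s+G-1$ is fixed (and small, the hypothesis $s\le 2$ keeping the horizon finite) rather than growing with $n$, so the Gaussian fluctuation is produced entirely by aggregating the $n$ i.i.d.\ enterprise contributions inside each increment. I would therefore be careful to index the triangular array by the pair (enterprise, age), so that the asymptotically negligible jumps are the per-enterprise increments whose modulus the Lindeberg indicator controls, and to record that the limiting variance $\leftindex^{PB}{\sigma}^2$ is non-degenerate, i.e.\ that at least one age carries positive observable-risk probability $\mathrm{E}_{\theta_0}\{\leftindex_TY_1^c(x-1)\}$, since otherwise the normal limit would collapse and the normalization would be meaningless.
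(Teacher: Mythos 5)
Your proposal is correct and follows essentially the same route as the paper: the statement is reduced to Rebolledo's martingale central limit theorem \citep[][Theorem 5.1]{the-statis:2002}, with Condition (a) identified with the predictable-variation limit \eqref{bedingung1} (established via the law of large numbers, as in Appendix \ref{proofbedingung1}) and Condition (c) verified through the uniform bound $(\sqrt{n}\theta_0(1-\theta_0))^{-1}$ on the per-enterprise jumps under Assumption \ref{pararaumeinfach}. Your added remarks on indexing the array by (enterprise, age) and on non-degeneracy of $\leftindex^{PB}\sigma^2$ are sensible refinements but do not change the argument.
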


Now written in long, by Theorem \ref{martiKT} it is	
$\frac{1}{\sqrt{n}} \sum_{x=1}^{s+G-1} \bigl(\Delta \leftindex_TN_\cdot^c(x) - \theta_0 \leftindex_TY_\cdot^c(x-1) \bigr) \Rightarrow  (\theta_0(1-\theta_0) \sum_{x=1}^{s+G-1} \mathrm{E}_{\theta_0} \{\leftindex_TY_i^c(x-1)\})^{0.5} \cdot Z$.
Due to $\mathrm{E}_{\theta_0} \{\leftindex_TY_i^c(x-1)\}$ being constant, Slutsky's lemma  \cite[][Lemma 2.8]{vanderVaart} yields: 	
\begin{multline*}
	\frac{1}{\sqrt{n}} \frac{\sum_{x=1}^{s+G-1} \bigl(\Delta \leftindex_TN_\cdot^c(x) - \theta_0 \leftindex_TY_\cdot^c(x-1) \bigr)}{\frac{1}{n} \sum_{x=1}^{s+G-1} \leftindex_TY_\cdot^c(x-1)} \\ \Rightarrow \frac{\sqrt{\theta_0(1-\theta_0) \sum_{x=1}^{s+G-1} \mathrm{E}_{\theta_0} \{\leftindex_TY_i^c(x-1)\}}}{ \sum_{x=1}^{s+G-1} \mathrm{E}_{\theta_0} \{\leftindex_TY_i^c(x-1)\}} \cdot Z
\end{multline*}
Due to Lemma \ref{3rep}, the upper expression equals $\sqrt{n}(\hat{\theta}_n - \theta_0)$. The answer to the second question of Section \ref{introduc} is now the following. 
\begin{corollary} \label{corfinal}
Under Assumptions \ref{Grundmodellxt}-\ref{A3:Ind} with $s \ge 2$ and $Z  \sim N(0,1)$, 
it is
$\sqrt{n}(\hat{\theta}_n - \theta_0) \Rightarrow \leftindex^{PB}\sigma^{-1} Z$, for $n \to \infty$.
\end{corollary}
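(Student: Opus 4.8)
The plan is to combine three facts that the preceding development has already supplied: the quotient representation of $\sqrt{n}(\hat{\theta}_n-\theta_0)$ in Lemma~\ref{3rep}, the martingale central limit theorem of Theorem~\ref{martiKT} for its numerator, and a law of large numbers for its denominator; Slutsky's lemma then delivers the claim. For brevity write $S:=\sum_{x=1}^{s+G-1}\mathrm{E}_{\theta_0}\{\leftindex_TY_1^c(x-1)\}$, so that by \eqref{bedingung1} one has $\leftindex^{PB}\sigma^2 = S/(\theta_0(1-\theta_0))$. Assumption~\ref{pararaumeinfach} keeps $\theta_0$ bounded away from $0$ and $1$, and since at least the first observable experiment carries positive probability we have $S>0$; this positivity is what will license the division in the final Slutsky step.

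For the numerator I would first record the algebraic identity, immediate from the definitions of $\leftindex^{PB}Y_{nx}$ and of $\leftindex_TM_\cdot^c$, that $\frac{1}{\sqrt{n}}\sum_{x=1}^{s+G-1}\bigl(\Delta\leftindex_TN_\cdot^c(x)-\theta_0\leftindex_TY_\cdot^c(x-1)\bigr)=\theta_0(1-\theta_0)\sum_{x=1}^{s+G-1}\leftindex^{PB}Y_{nx}$. Theorem~\ref{martiKT} states $\sum_{x=1}^{s+G-1}\leftindex^{PB}Y_{nx}\Rightarrow\leftindex^{PB}\sigma Z$, so the numerator converges weakly to $\theta_0(1-\theta_0)\leftindex^{PB}\sigma Z=\sqrt{\theta_0(1-\theta_0)\,S}\,Z$, matching the intermediate weak limit displayed just before the statement.

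For the denominator, Assumption~\ref{A3:Ind} makes $\leftindex_TY_i^c(x-1)$, $i=1,\ldots,n$, i.i.d.\ for each fixed $x$, so the ordinary law of large numbers gives $\frac{1}{n}\sum_{x=1}^{s+G-1}\leftindex_TY_\cdot^c(x-1)=\sum_{x=1}^{s+G-1}\frac{1}{n}\sum_{i=1}^n\leftindex_TY_i^c(x-1)\xrightarrow{p}S$; this is exactly \eqref{bedingung1} after stripping off the factor $(\theta_0(1-\theta_0))^{-1}$.

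Finally I would apply Slutsky's lemma to the quotient appearing in the second line of Lemma~\ref{3rep}: a numerator converging in distribution over a denominator converging in probability to the positive constant $S$ yields $\sqrt{n}(\hat{\theta}_n-\theta_0)\Rightarrow\frac{\sqrt{\theta_0(1-\theta_0)\,S}}{S}\,Z=\sqrt{\frac{\theta_0(1-\theta_0)}{S}}\,Z=\leftindex^{PB}\sigma^{-1}Z$, which is the assertion (and incidentally forces $\hat{\theta}_n\xrightarrow{p}\theta_0$, i.e.\ consistency). The genuine work is entirely contained in Theorem~\ref{martiKT}, whose Rebolledo-type conditions \eqref{bedingung1} and the Lindeberg display preceding it have already been verified; the Corollary itself is essentially bookkeeping, the only points demanding care being the identification of the limiting constant with $\leftindex^{PB}\sigma^{-1}$ and the strict positivity $S>0$ that renders the Slutsky quotient non-degenerate.
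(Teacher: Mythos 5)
Your proposal is correct and follows essentially the same route as the paper: the quotient representation from Lemma~\ref{3rep}, Theorem~\ref{martiKT} for the numerator, the law of large numbers \eqref{bedingung1} for the denominator, and Slutsky's lemma to combine them into the limit $\sqrt{\theta_0(1-\theta_0)/S}\,Z=\leftindex^{PB}\sigma^{-1}Z$. Your explicit remark that $S>0$ is needed for the Slutsky quotient to be non-degenerate is a small point the paper leaves implicit, but otherwise the two arguments coincide.
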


Note that even though $n$ is not observable in the estimation of $\leftindex^{PB}\sigma^2$,  
by
$Var(\sqrt{n} \hat{\theta}_n) =\leftindex^{PB}\sigma^{-2}$  it is 
$\widehat{Var(\hat{\theta}_n)}= n^{-1} \widehat{\leftindex^{PB}\sigma^{-2}}$ (with  $\hat{\theta}_n$ inserted for $\theta_0$) and $n$ cancels. Resuming the idea of a practice formula, for the square standard error and using the last equality in \eqref{nyexp} and  inserting \eqref{thetaprac} we have
\begin{equation} \label{seprac}
	\begin{split}
\widehat{Var(\hat{\theta})} & \approx  \frac{\hat{\theta}(1-\hat{\theta})}{ (\sum_{j=1}^{m_{uncens}} d^{\text{obs}}_j + s m_{cens})}  \\
     & =    \frac{m_{uncens}(\sum_{j=1}^{m_{uncens}} d^{\text{obs}}_j + s m_{cens})-m_{uncens}^2}{(\sum_{j=1}^{m_{uncens}} d^{\text{obs}}_j + s m_{cens})^3}. 
     \end{split}
\end{equation}

For the data described in Section \ref{sec11}, we now derive a confidence interval for the parameter $\theta$ of the geometric distribution modeling enterprise survival in Germany, and finally for the enterprise life expectancy.

\section{Result for AFiD panel, comparisons, simulations} \label{case1}

The sufficient statistics had already been given in Table \ref{Data}.

\subsection{Mapping of symbols and adequacy of assumptions} 

Specifically we have an observation period of `width' $s=2$ years, namely the years 2028 and 2019. We  consider as population all enterprises founded in the years 2013--2017, i.e. $G=5$. Table \ref{Datarow} gives a view on the raw data that easier shows the mapping of measuring foundation by calendar year to the age one year before observation starts, $T$ (see Figure \ref{model}).

\begin{table}[h]
			\caption{Counts of Table \ref{Data}, stratified by year of foundation.} \label{Datarow} 	
	\begin{center}
		\begin{tabular}{crrr}
			\hline \hline 
			& \multicolumn{3}{c}{Year of closure}    \\
			Foundation year &      2018  ($d^{obs}_j=1$)&      2019 ($d^{obs}_j=2$)&                2020 or later \\ \hline \hline
			2013 ($t=4$)        & 37,016 & 17,295 &  191,803 \\
			2014 ($t=3$)        & 42,272 & 20,305 &  200,411 \\
			2015 ($t=2$)       & 35,588 & 23,353 &  209,649 \\
			2016 ($t=1$)        & 34,579 & 27,464 &  278,223 \\
			2017 ($t=0$)       & 18,687 & 18,633 &  292,566  \\ \hline \hline
		\end{tabular}
	\end{center}
\end{table}

Assumption \ref{Grundmodellxt} requires that the closure probability does not vary with age. We know of no empirical evidence for the contrary. Fortunately, we do not need to argue about the foundation distribution, because the analysis is nonparametric in that direction. First (weak) indications against the independence assumption of $X$ and $T$ for business demography exist \citep{jacobo2024,topaweis2024}. (For human demography independence must be rejected, life expectation clearly increases over time \cite[see e.g.][]{vaupel1970}.) Assumption \ref{obscond} is strictly obeyed. Assumption \ref{pararaumeinfach} is not critical.  With respect to the independence Assumption \ref{A3:Ind} between enterprises, it must be admitted that some dependence is conceivable because lifespans overlap considerably. As attempt to cope with dependence, one may note that  Rebolledo's theorem does not explicitly require the enterprises, or their histories, to be independent. However, his assumption `$Y(t)/n \to \tilde{P}_{\theta}(X \le t)$', necessary for independence, to our knowledge is not fulfilled by important dependence models.

\subsection{Point and interval estimate}

 The point estimate, as given by Lemma \ref{3rep}, and the standard error, as given by Corollary \ref{corfinal}, are calculated in Table \ref{statmort}.
\begin{table}[h]
	\centering
	\caption{Statistics (as counts (No.) or in years) from Table \ref{Data} for point estimate \eqref{thetaprac} of the one-year closure probability and its standard error (SE) by \eqref{seprac}}
	\label{statmort} \smallskip
	{\footnotesize
		\begin{tabular}{ccccc}
			\hline
			\noalign{\smallskip}
			No. observ. & No. closures & `Time at risk' (years) & Point & SE \\ 
			{\em m} & {\em m}$_{uncens}$ & $\sum_{j=1}^{m_{uncens}} d_j^{obs} + 2 m_{cens} $  & $\hat{\theta}$ & $\sqrt{\widehat{Var(\hat{\theta})}}$ \\	\hline	
			\noalign{\smallskip}
			1,447,814 & 275,162  & 1$\cdot$168,112 + 2$\cdot$107,050+ 2$\cdot$1,172,652   & 0.1009 & $1.8 \cdot 10^{-4}$ \\  \hline
			\noalign{\smallskip}       
	\end{tabular}}
\end{table}
Together with the asymptotic normality due to Corollary \ref{corfinal}, an approximate confidence interval for $\theta$ (at level $1-\alpha=95\%$) results in $[0.1005;0.1013]$. (Our interval for the German enterprise life expectancy is hence $[9.88;9.95]$ so that the standard error for the expecation is $0.036$ and width of the confidence interval are two months.)

It should be added that Table \ref{Datarow} supplies, for each enterprise, the year of foundation. 
With that information, an age-inhomogeneous, i.e. $\mathbf{x}$-specific model, for survival is easily possible with changes starting in Formula \ref{conprobs}. Recall that the rather short support 2013 -- 2017 for foundation ($T$) is easily generalized. 

\subsection{Results by others} 

We relate now our estimation to that of others on enterprise demography, for different countries and/or times. For the comparisons one must cautiously keep in mind that imprecisely matching definitions of foundation and closure hinder the comparison \cite[see e.g.][Figure 1]{cochran1981}. \cite{bruederl1992} report, from their earlier German study of 1,621 enterprises, a 25\%-quantile of 2 years for the closure distribution and the 37\%-quantile as 5 years. The geometric distribution (Assumption \ref{Grundmodellxt}) implies for our data that 2 years is the 19\%-quantile and 5 years the 41\%-quantile, which are quite close findings. A study for Japan, also for earlier cohorts, reports 8 years as 4.8\%-quantile \citep{Honjo2000}, which is  much lower in closure intensity, compared to Germany. Our model implies 8 years as 57\%-quantile. However, for the more recent Japanese cohorts 2003 to 2013, the same author states that closures have double \citep{kato2023}. For Portugal, a median survival of 4 years (and a 20\%-quantile of 1 year) have been reported \citep{Mata1994} and imply higher closure intensity, compared to Germany. For Pakistan, a 80\%-quantile of 4 years is reported \citep{salim2016} and, compared to our 80\%-quantile of 15 years, indicates much faster closure. Also faster, by not by that much, is a median of 5 years until failure in the USA \cite{cochran1981}, whereas Germany (now) has a 7-years median.

\subsection{Simulation}  \label{simu}

In a short simulation, the consistency property of $\hat{\theta}_n$ from Lemma \ref{3rep} under the conditions of the application is studied by means of the (simulated) mean square error (MSE) $\frac{1}{K_{sim}} \sum_{K=1}^{K_{sim}} (\hat{\theta}^K_n - \theta_0)^2 $ over $K_{sim}=1,000$ simulation loops. Table \ref{nvar} confirms that for the application, with $n > 1 \; mio$, the MSE is small.  
\begin{table}[h]
	\caption{Mean square error for point estimate of Lemma \ref{3rep}, under conditions of this application in Section \ref{case1}, including $G=5$, $s=2$ and $\theta_0=0.1$, for some latent sample sizes $n$} \label{nvar}	
	\begin{center}
		\begin{tabular}{crrrrr} \hline 
			n&      100 &      1,000 &                10,000 & 100,000 &1,000,000\\ \hline\hline
			$\mathrm{MSE}$ & $8.57 \cdot 10^{-6}$ & $8.2 \cdot 10^{-8}$ & $6.96 \cdot 10^{-10}$& $7.39 \cdot 10^{-12}$& $8.11 \cdot 10^{-14}$\\ \hline \hline 
		\end{tabular}
	\end{center}
\end{table}

\section{Conclusion}
\label{sec4}

Quite commonly, event history data are time-discrete \citep[see e.g.][]{koopmann2008,stroweis2015,weissbachm2021effect}. Opposed to the time-continuous model \citep[see e.g.][]{flem1991}, the discrete time renders topics such as the Doob-Meyer decomposition, the product integral, continuity considerations for filtrations and the definition of a predictable $\sigma$-field abundant. An analysis can become accessible to a wider audience and gives room for further consideration. For instance for observational data, such as the AFiD panel here, Assumption \ref{A3:Ind} is violated \cite[see][Items 9, 12(c) and 12(d)]{strobe}. Luckily, the assumption of the latent measurements to be a simple random sample is not strong in the sense that the latent measurements considered as population, similarly for least squares considerations, maximizing the likelihood still results in the true parameter $\theta_0$. This is because such maximimum minimized - asymptotically - the Kulback-Leibler loss, and identification of the parameter in a geometric distribution is easily shown. However we like to admit that the interpretation of the standard error then is more subtle. Another source for improving the analysis, might be knowledge of a parametric distribution for $T$, e.g. the foundation process. The point estimate will not be altered, but the parametric analysis then only requires a central, and not a martingale, limit theorem and the use of M-estimation theory \cite[as in][Chapter 5]{vanderVaart} results in a smaller standard error than for semi-parametric modelling. 

We like to point out two minor technical aspects. First, as $n$ increases, the bounds to the counting processes (compensated or not) may increase without limits, so that the resulting martingale is not global anymore, but only local. The respective theory exists, but we suppressed it here. The second minor comment is that using a random truncation design, i.e. $T$ instead of $t$, is not only the the correct design for observational data, but has another advantage. The dominating measure $\tilde{P}_{\theta}^{\mathbb{T}}$ in Theorem \ref{comptn} depends on $T$, and would dependent on $t_i$ for fixed truncation. The product property of a joint probability density of independent random variables seems to require equal dominating measures of the marginal distributions. Random truncation resolves such in-ambiguity.

Interesting to us is, finally, that the work is a contribution to the statistical field of forecasting, because $\theta$ predicts survival for enterprises alive at the end the observation period.

\vspace*{0.7cm}
\noindent \textbf{Declarations}:
The authors declare that they have no conflict of interest.

\vspace*{0.3cm}
\textbf{Acknowledgment}:
We thank Dr. Claudia Gregor-Lawrenz (BaFin, Bonn) for continuous support on the topic, e.g. by contributing earlier data in the LTRC-design. We thank Dr. Florian K\"ohler (Destatis, Hannover) for supply of the data, Dr. Wolfram Lohse (Görg, Hamburg) for support in the data acquisition process, Simon Rommelspacher (Destatis, Wiesbaden) for advise on the measurement definitions. The financial support from the Deutsche Forschungsgemeinschaft (DFG) of R. Wei\ss bach is gratefully acknowledged (Grant 386913674 ``Multi-state, multi-time, multi-level analysis of health-related demographic events: Statistical aspects and applications'').


\appendix
\renewcommand{\theequation}{\thesection.\arabic{equation}}
\numberwithin{equation}{section}

	\section{\bf $_TN$-compensator (Theorem \ref{comptn})} \label{proofcomptn}
	
	Note first that for measurable $Z$ the expectation conditional on an event is \cite[][Definition 8.9]{klenke}
	\begin{equation} \label{2punkt47}
	E_{\theta}^{\mathbb{T}}(Z) = \sum_{z} z \tilde{P}_{\theta}\{Z=z \vert \mathbb{T}\} 
	= \frac{E_{\theta}(\mathds{1}_{\mathbb{T}} Z)}{\tilde{P}_{\theta}(\mathbb{T})}, 
	\end{equation}
	where $E_{\theta}^{\mathbb{T}}(Z):= \int Z d \tilde{P}_{\theta}^{\mathbb{T}}$.
		Of course $M:=N-A$ is a martingale with respect to $P_{\theta_0}$ and filtration $\{\mathcal{G}_x: x \in \mathbb{N}_0\}$. We will infer thereof that also, for $\tilde{P}_{\theta_0}(\mathbb{T})>0$, is $\leftindex_TM= \leftindex_T N - \leftindex_T A$ a martingale with respect to $\tilde{P}_{\theta_0}^{\mathbb{T}}$ and the filtration $\{\leftindex_T{\mathcal{G}}_x: x \in \mathbb{N}_0\}$.
		
		We need to show \cite[][Section 35]{Billing} that (i) $_TM$ is adapted to $\{\leftindex_T{\mathcal{G}}_x: x \geq 0\}$, (ii) $E_{\theta}^{\mathbb{T}}(\vert \leftindex_TM(x) \vert) < \infty$, for all $x  \in \mathbb{N}_0$ and (iii) $E_{\theta}^{\mathbb{T}}(\leftindex_TM(x+v) \vert \leftindex_T{\mathcal{G}}_x)=\leftindex_TM(x)$ for all $x \in \mathbb{N}$ and $v \in \mathbb{N}$. 
		
Note for (i) first that by definition of $\leftindex_TN$ as well as of \eqref{con2d1d} it is $\leftindex_TA(x, \theta_0)= A(x, \theta_0) - A(x \wedge T, \theta_0) =\sum_{k=1}^{x} \mathds{1}_{\{k \ge T+1 \}} E_{\theta}^{\mathbb{T}}\{\Delta N(k) \vert \mathcal{G}_{k-1}\}$. 

The process $\leftindex_TN(x)$ is obviously measurable with respect to $\{\leftindex_T{\mathcal{G}}_x: x \geq 0\}$. Furthermore follows from the $\mathcal{F}_k$-measurability of $E_{\theta}^{\mathbb{T}}\{\Delta N(k) \vert \mathcal{F}_{k-1}\}$ and the subset property	
	\[
	 \mathcal{F}_k \subseteq \mathcal{G}_k \subseteq \, \leftindex_T{\mathcal{G}}_k \subseteq \, \leftindex_T{\mathcal{G}}_x
	\]
	the  $\leftindex_T{\mathcal{G}}_x$-measurability of $E_{\theta}^{\mathbb{T}}\{\Delta N(k) \vert \mathcal{F}_{k-1}\}$ for $k\leq x$. It leave to show the measurability of $\mathds{1}_{\{k\ge T+1\}}$. As $T$ is a $\mathcal{G}_x$-stopping time, follows from	
	\[
	\{k\ge T+1\} \cap \{T \leq k\}= \{T \leq k-1\}\in \mathcal{G}_k \subseteq \mathcal{G}_x \; \text{with} \; k \leq x
	\]
	the measurability of $\{k\ge T+1\}$ with respect to $\mathcal{G}_T$ due to definition \eqref{gtdef} (and especially the $\mathcal{G}_T$-measurability of $T$ \cite[][after Formula 35.20]{Billing}) and finally also the $\leftindex_T{\mathcal{G}}_x$ measurability. The process $\leftindex_TM$ is hence adapted to $\leftindex_T{\mathcal{G}}_x$. 

The property (ii) $E_{\theta}^{\mathbb{T}}\vert\leftindex_TM(x)\vert < \infty, \, \forall x \in \mathbb{N}_0$ holds because $M$ is a $(\mathcal{G}_x, \tilde{P}_{\theta}$)-martingale $E_{\theta}\vert\leftindex_TM(x)\vert =  E_{\theta}\vert M(x) - M(x \wedge T)\vert  
		\leq E_{\theta} \{\vert M(x)\vert + \vert M(x \wedge T)\vert \} 
		 =  E_{\theta}\vert M(x)\vert + \mathrm{E}\vert M(x \wedge T)\vert$ 
with all finite summands in the last expression (as $M(x \wedge T)$ is a martingale because $T$ is a stopping time by (Doob's) optional sampling theorem (see  \citet[][Theorem 2.1]{doob1953}, \citet[][Corollary to Thorem 9.3.3]{Chung} or \citet[][Theorem 35.2]{Billing}). Finally by \eqref{2punkt47} it is: 
\[
E_{\theta}^{\mathbb{T}}\vert\leftindex_TM(x)\vert = \tilde{P}_{\theta} (\mathbb{T})^{-1}  E_{\theta}\vert \mathds{1}_{\mathbb{T}}\leftindex_TM(x)\vert \le \tilde{P}_{\theta} (\mathbb{T})^{-1}  E_{\theta}\vert \leftindex_TM(x)\vert
\]
 
For (iii) by (i), the definition of the conditional expectation and because $\tilde{P}_{\theta}(X\ge T +1)>0$ it is:
\begin{eqnarray}
 & & 	\mathrm{E}_{\theta}^{\mathbb{T}}\{\leftindex_TM(x+v) \vert \leftindex_T{\mathcal{G}}_x\}= \leftindex_TM(x) 
	\Leftrightarrow  \mathrm{E}_{\theta}^{\mathbb{T}} \bigl\{\leftindex_TM(x+v)- \leftindex_TM(x)  \vert \leftindex_T{\mathcal{G}}_x \bigr\} =0 \nonumber \\
	& \Leftrightarrow & \mathrm{E}_{\theta}^{\mathbb{T}} \Bigl\{ \mathds{1}_{\{\mathbb{G}\} }\bigl[\leftindex_TM(x+v
	)- \leftindex_TM(x)\bigr]\Bigr\}=0, \, \forall \mathbb{G} \in \leftindex_T{\mathcal{G}}_x, \label{Prop40} \nonumber \\
		& \Leftrightarrow & \tilde{P}_{\theta} (\mathbb{T})^{-1} \mathrm{E}_{\theta} \Bigl\{ \mathds{1}_{\{\mathbb{G\cap \mathbb{T}}\} }\bigl[\leftindex_TM(x+v
		)- \leftindex_TM(x)\bigr]\Bigr\}=0 \nonumber \\
	& 	\Leftrightarrow & \mathrm{E}_{\theta} \Bigl\{ \mathds{1}_{\{\mathbb{G\cap \mathbb{T}}\} }\bigl[\leftindex_TM(x+v
	)- \leftindex_TM(x)\bigr]\Bigr\}=0 , \, \forall \mathbb{G} \in \leftindex_T{\mathcal{G}}_x \label{Prop41}	
\end{eqnarray}

Because $\{T \le x\} \in \leftindex_T{\mathcal{G}} \subset \leftindex_T{\mathcal{G}}_x$ and with it the complement $\{T \ge x+1\}$ those two sets are a partition of $\leftindex_T{\mathcal{G}}_x$ and we may restrict to prove \eqref{Prop41} to (a) $\mathbb{G} \subseteq \{T \leq x\}$ and (b) $\mathbb{G} \subseteq \{T \geq x+1\}$ (as any set that `overlaps' can be decomposed into two respective subsets).

\noindent {Case (a)}: We can rewrite $ \mathbb{G} \cap \mathbb{T}=(\mathbb{T}\cap \{T \leq x\})\cap(\mathbb{G} \cap \{T \leq x\})$.
		
		Now, as obviously $\mathbb{T} \in \mathcal{G}_T$, so that by definition \eqref{gtdef} it is $\mathbb{T} \cap \{T \leq x\} \in \mathcal{G}_x$. Furthermore (see \eqref{Prop41}) holds $\mathbb{G} \in  \leftindex_T{\mathcal{G}}_x$ and hence (wlog) either $\mathbb{G} \in \mathcal{G}_T$ or $\mathbb{G} \in \mathcal{G}_x$. In the first case, by definition, $\mathbb{G} \cap \{T \leq x\} \in \mathcal{G}_x$ and in the second case by the nestedness $\mathbb{G} \cap \{T \leq x\}=\mathbb{G} \in \mathcal{G}_x$. So finally $\mathbb{G}  \cap \mathbb{T} \in \mathcal{G}_x$. 
		
	Furthermore holds on the set $\{T \leq x\}$
	\begin{eqnarray*}
		\leftindex_TM(x+v)- \leftindex_TM(x) & = & M(x+v)-M(T)-M(x)+M(T) \\
		&=& M(x+v)-M(x)
	\end{eqnarray*}
so that
	\begin{equation*}
		\mathrm{E}_{\theta} \Bigl\{ \mathds{1}_{\{T \leq x\}}\bigl[\leftindex_TM(x+v)- \leftindex_TM(x)\bigr] \vert \mathcal{G}_x \Bigr\} =\mathrm{E}_{\theta} \bigl\{  M(x+v)- M(x) \vert \mathcal{G}_x \bigr\} =0.
	\end{equation*}	
by the martingale property of $M$. The definition of the conditional expectation requires
		\begin{multline*}
						\mathrm{E}_{\theta} \Bigl\{ \mathds{1}_{\{\{T \leq x\} \cap \mathbb{G} \cap \mathbb{T}\} }\bigl[\leftindex_TM(x+v)- \leftindex_TM(x)\bigr]\Bigr\}=0  \\
			\Rightarrow	 \mathrm{E}_{\theta} \Bigl\{ \mathds{1}_{\{\mathbb{G} \cap \mathbb{T}\} }\bigl[\leftindex_TM(x+v)- \leftindex_TM(x)\bigr]\Bigr\}=0 
		\end{multline*}
where for the last step remember $\mathbb{G} \subseteq \{T \leq x\}$.
		
\noindent {Case (b)}:	We start by defining set $\mathbb{K}:=\{T \leq x+v\}$. Because on event $\mathbb{G}$ holds $M(x \wedge T)=M(x)$, we may rewrite \eqref{Prop41} as follows:
				\begin{align*}
			\Leftrightarrow &\mathrm{E}_{\theta}\Bigl\{\mathds{1}_{\{\mathbb{G} \cap \mathbb{T}\}} \bigl[ \, M(x+v) - M((x+v) \wedge T) - M(x) + M(x \wedge T) \bigr]  \Bigr\} =0  \\
			\Leftrightarrow &\mathrm{E}_{\theta}\Bigl\{\mathds{1}_{\{\mathbb{G} \cap \mathbb{T}\}} \bigl[ \, M(x+v) - M((x+v) \wedge T) \bigr]  \Bigr\}=0
		\end{align*}
		which is true if additionally $\omega$ is from $\mathbb{K}^c:=\{T\ge x+v+1\}$, beause then $M((x+v)\wedge T)=M(x+v)$, and hence equality \eqref{Prop41} is fulfilled. It remains to show the equality		
		\begin{align}
			\mathrm{E}_{\theta}\Bigl\{\mathds{1}_{\{\mathbb{K} \cap \mathbb{G} \cap \mathbb{T}\}} \bigl[ \, M(x+v) - M((x+v) \wedge T) \bigr]  \Bigr\}=0.
		\end{align}		
We note first that $\mathbb{G} \cap \{T \leq x\} = \emptyset \in \mathcal{G}_x$ and hence  $\mathbb{G} \in \mathcal{G}_T$. Already used is that $\mathbb{T} \in \mathcal{G}_T$ so that, due $\mathcal{G}_T$ being a $\sigma$-field, $\mathbb{G} \cap \mathbb{T} \in \mathcal{G}_T$, i.e. by definition of $\mathcal{G}_T$ holds $\mathbb{G} \cap \mathbb{T} \cap \{T \leq x\} \in \mathcal{G}_x$. Furthermore it is $\{T \leq x\} \subseteq \{T \leq x+v\}=\mathbb{K}$ so that	
				 $\mathbb{G} \cap \mathbb{T}\cap \mathbb{K}\cap  \{T \le x\} = \mathbb{G} \cap \mathbb{T} \cap \{T \le x\} \in \mathcal{G}_x$ and hence $\mathbb{K} \cap \mathbb{G} \cap \mathbb{T} \in \mathcal{G}_T$ by the definition \eqref{gtdef}. 
		Also follows from $ \mathbb{G} \cap \mathbb{T} \in \mathcal{G}_T$ that $\mathbb{K} \cap \mathbb{G} \cap \mathbb{T} = (\mathbb{G} \cap \mathbb{T}) \cap \{ T \leq x+v\} \in \mathcal{G}_{x+v}$. In combination it is $\mathbb{K} \cap \mathbb{G} \cap \mathbb{T} \in  \mathcal{G}_{x+v} \cap \mathcal{G}_T =\mathcal{G}_{(x+v) \wedge T}$.
		With $T$ are $x+v+T$ ordered stopping times and by (the proof of) the optional sampling theorem \citep[][Theorem 35.2]{Billing} is $M((x+v) \wedge T)$ a $\mathcal{G}_{x+v}$-Martingal and it is (by definition of a conditional expectation)	
		\begin{multline*}
			\mathrm{E}_{\theta}\Bigl\{\mathds{1}_{\{\mathbb{K} \cap \mathbb{G} \cap \mathbb{T}\}} \bigl[ \, M(x+v) - M((x+v) \wedge T) \bigr]  \Bigr\} = 0 \quad \text{if the next is} \\  \mathrm{E}_{\theta}\Bigl\{ \, M(x+v) - M((x+v) \wedge T) \vert \mathcal{G}_{(x+v) \wedge T}\Bigr\} \\
			 =\mathrm{E}_{\theta}\Bigl\{ \, M(x+v)  \vert \mathcal{G}_{(x+v) \wedge T}\Bigr\} - \mathrm{E}_{\theta}\Bigl\{  M((x+v) \wedge T) \vert \mathcal{G}_{(x+v) \wedge T}\Bigr\} \\
			= M((x+v) \wedge T)- M((x+v) \wedge T)=0,
		\end{multline*}
		which proves \eqref{Prop41} now also for $\mathbb{G} \subseteq \{T  \ge x+1\}$. \qed

	\section{\bf Asymptotic variance (17)} \label{proofbedingung1}

By  \citet[][Formula 5.20]{the-statis:2002} the predictable variation process $	\langle\leftindex_TV_\cdot^c \rangle(\chi)$, i.e. the compensator of the square of $\leftindex_TV_\cdot^c (\chi):=\sum_{x=1}^{\chi}(\sqrt{n} \theta_0 (1 - \theta_0))^{-1} \Delta \leftindex_TM_\cdot^c(\chi)$ - which is a martingale itself due $\sqrt{n} \theta_0 (1 - \theta_0)$ being predictable and bounded - can be calculated as
\begin{align}
	\langle\leftindex_TV_\cdot^c \rangle(\chi)= & \sum_{x=1}^\chi \mathrm{Var}_{\theta_0}^{\mathbb{T}}\Bigl\{\Delta \bigl(\sum_{i=1}^n \leftindex_TV_i^c(x) \bigr) \vert \leftindex_T{\mathcal{F}}_{x-1}^c \Bigr\} \nonumber \\
	= & \sum_{x=1}^\chi\biggl[ \sum_{i=1}^n \mathrm{Var}_{\theta_0}^{\mathbb{T}} \big\{\Delta \leftindex_TV_i^c(x) \vert \leftindex_T{\mathcal{F}}_{x-1}^c \bigr\} \nonumber \\
	& + 2 \sum_{\substack{i_1,i_2=1\\ i_1< i_2}}^n \mathrm{Cov}_{\theta_0}^{\mathbb{T}} \bigl\{ \Delta \leftindex_TV_{i_1}^c(x),  \Delta \leftindex_TV_{i_2}^c(x) \vert \leftindex_T{\mathcal{F}}_{x-1}^c \bigr\}\biggr]. \label{varproc_base}
\end{align}
Existance requires $\langle \leftindex_TV_\cdot^c (0)\rangle=0$ and  $\langle \leftindex_TV_\cdot^c (\chi)\rangle < \infty$ \cite[][Corollary 1.4.2]{flem1991}. (Note that the condition $\leftindex_T{\mathcal{F}}_{x-1}^c$ needs a generalization from one draw to the entire sample. Details are left out here.) The first existancecondition is clear, the second follows. 

\subsection{Proof of boundedness}

Note first (without proofs here) that (a) $\mathrm{E}_{\theta_0}^{\mathbb{T}}\{\Delta \leftindex_TM_\cdot^c(x_1)  \Delta \leftindex_TM_\cdot^c(x_2) \vert \leftindex_T{\mathcal{F}}_{x_1}^c \}= \Delta \leftindex_TM_\cdot^c(x_1) \mathrm{E}_{\theta_0}^{\mathbb{T}}\{  \Delta \leftindex_TM_\cdot^c(x_2)  \vert \leftindex_T{\mathcal{F}}_{x_1}^c\}$ is zero for $x_1 < x_2$. Note secondly that, for whatever condition $A$, is it (b)  $\mathds{1}_A^2=\mathds{1}_A$. And finally note (c) that it is $ \mathrm{E}_{\theta_0}^{\mathbb{T}} \{  \Delta \leftindex_TN_{i_1}^c(x) \Delta \leftindex_TN_{i_2}^c(x) \vert  \leftindex_T{\mathcal{F}}_{x-1}^c \}=\mathrm{E}_{\theta_0}^{\mathbb{T}} \{ \Delta \leftindex_TN_{i_1}^c(x) \vert \leftindex_T{\mathcal{F}}_{x-1}^c \} \mathrm{E}_{\theta_0}^{\mathbb{T}} \{\Delta \leftindex_TN_{i_2}^c(x)  \vert  \leftindex_T{\mathcal{F}}_{x-1}^c \}$ for $i_1 \neq i_2$ due to the independence of enterprises. Consider now
\begin{equation}\label{rafael}
	\begin{split}
	\mathrm{E}_{\theta_0}^{\mathbb{T}} \{ (\leftindex_TV_\cdot^c(\chi))^2\}   =   \frac{1}{n \theta_0^2 (1- \theta_0)^2} \mathrm{E}_{\theta_0}^{\mathbb{T}}\biggl\{\Bigl(\sum_{x=1}^\chi  \Delta \leftindex_TM_\cdot^c(x) \Bigr)^2\biggr\}  \\
	 =  \frac{1}{n \theta_0^2 (1- \theta_0)^2} \biggl[ \sum_{x=1}^\chi\mathrm{E}_{\theta_0}^{\mathbb{T}}\biggl\{\bigl(  \Delta \leftindex_TM_\cdot^c(x) \bigr)^2\biggr\} \\ 
	 + 2 \sum_{\substack{x_1, x_2=1 \\
			x_1 < x_2}}^\chi \mathrm{E}_{\theta_0}^{\mathbb{T}}\biggl\{\bigl(  \Delta \leftindex_TM_\cdot^c(x_1) \bigr)\bigl( \Delta \leftindex_TM_\cdot^c(x_2) \bigr)\biggr\} \biggr] \\
	 =  \frac{1}{n \theta_0^2 (1- \theta_0)^2} \biggl[ \sum_{x=1}^\chi \mathrm{E}_{\theta_0}^{\mathbb{T}} \biggl\{ \mathrm{E}_{\theta_0}^{\mathbb{T}}\Bigl\{\bigl(  \Delta \leftindex_TM_\cdot^c(x) \bigr)^2 \vert \leftindex_T{\mathcal{F}}_{x-1}^c\Bigr\} \biggr\} 
		\end{split}
\end{equation}
In the second equality it is $\sum_{\substack{x_1, x_2=1 \\
			x_1 < x_2}}^\chi \mathrm{E}_{\theta_0}^{\mathbb{T}}\{( \Delta \leftindex_TM_\cdot^c(x_1) )( \Delta \leftindex_TM_\cdot^c(x_2) )\} \stackrel{(a)}{=} 0$ by iterated expectation. 
Again by iterated expectation it is for one summand 
\begin{equation}\label{bedingterExp}
\begin{split}
	\mathrm{E}_{\theta_0}^{\mathbb{T}}\Bigl\{\bigl(  \Delta \leftindex_TM_\cdot^c(x) \bigr)^2 \vert \leftindex_T{\mathcal{F}}_{x-1}^c\Bigr\} =  \mathrm{E}_{\theta_0}^{\mathbb{T}}\Bigl\{ \bigl( \Delta \leftindex_TN_\cdot^c(x) - \Delta \leftindex_TA_\cdot^c(x, \theta_0) \bigr) ^2 \vert \leftindex_T{\mathcal{F}}_{x-1}^c  \Bigr\}  \\
	=   \mathrm{E}_{\theta_0}^{\mathbb{T}} \bigl\{(\Delta \leftindex_TN_\cdot^c(x))^2 \vert  \leftindex_T{\mathcal{F}}_{x-1}^c\bigr\}  -  (\Delta \leftindex_TA_\cdot^c(x, \theta_0))^2, 
\end{split}	
\end{equation}
due to the $\leftindex_T{\mathcal{F}}_{x-1}^c$-predictability of $\leftindex_TA_\cdot^c(x, \theta_0)$.

Consider the first term 
\begin{multline*}
	\mathrm{E}_{\theta_0}^{\mathbb{T}} \bigl\{(\Delta \leftindex_TN_\cdot^c(x))^2 \vert  \leftindex_T{\mathcal{F}}_{x-1}^c\bigr\} 
	= \mathrm{E}_{\theta_0}^{\mathbb{T}} \Bigl\{  \sum_{i=1}^n (\Delta \leftindex_TN_i^c(x))^2 \vert  \leftindex_T{\mathcal{F}}_{x-1}^c\Bigr\} \\ +2 \mathrm{E}_{\theta_0}^{\mathbb{T}} \Bigl\{  \sum_{\substack{i_1, i_2 = \\ i_1 < i_2}}^n\bigl( \Delta \leftindex_TN_{i_1}^c(x) \Delta \leftindex_TN_{i_2}^c(x) \bigr) \vert  \leftindex_T{\mathcal{F}}_{x-1}^c\Bigr\} \\
	\stackrel{\text{by} \;(c)}{=} \mathrm{E}_{\theta_0}^{\mathbb{T}} \Bigl\{  \sum_{i=1}^n \Delta \leftindex_TN_i^c(x)^2 \vert  \leftindex_T{\mathcal{F}}_{x-1}^c\Bigr\} \quad (\text{by (b) as} \; \Delta \leftindex_TN_{i_1}^c(x) \in \{0,1\}) \\ +2 \sum_{\substack{i_1, i_2 =1 \\ i_1 < i_2}}^n \mathrm{E}_{\theta}^{\mathbb{T}} \Bigl\{ \Delta \leftindex_TN_{i_1}^c(x) \vert \leftindex_T{\mathcal{F}}_{x-1}^c\Bigr\} \mathrm{E}_{\theta_0}^{\mathbb{T}} \Bigl\{\Delta \leftindex_TN_{i_2}^c(x)  \vert  \leftindex_T{\mathcal{F}}_{x-1}^c\Bigr\} \quad (\text{by (c))}\\
	= \mathrm{E}_{\theta_0}^{\mathbb{T}} \Bigl\{  \Delta \leftindex_TN_\cdot^c(x) \vert  \leftindex_T{\mathcal{F}}_{x-1}^c\Bigr\}
	+2 \theta_0^2 \sum_{\substack{i_1, i_2 =1 \\ i_1 < i_2}}^n \leftindex_TY_{i_1}^c(x-1) \leftindex_TY_{i_2}^c(x-1)
\end{multline*}

The second term contains the number of combinations of different enterprises whose potential closure in $x$ is observable. In $x-1$ there are $\leftindex_TY_\cdot^c(x-1)$ of such enterprises so that:
\begin{multline*}
	\mathrm{E}_{\theta_0}^{\mathbb{T}} \bigl\{(\Delta \leftindex_TN_\cdot^c(x))^2 \vert  \leftindex_T{\mathcal{F}}_{x-1}^c\bigr\} 
	= \mathrm{E}_{\theta_0}^{\mathbb{T}} \Bigl\{  \Delta \leftindex_TN_\cdot^c(x) \vert  \leftindex_T{\mathcal{F}}_{x-1}^c\Bigr\}
	+2 \theta_0^2 \binom{\leftindex_TY_\cdot^c(x-1)}{2}  \\
	=\theta_0 \leftindex_TY_\cdot^c(x-1) + 2 \theta_0^2 \frac{ \leftindex_TY_\cdot^c(x-1)!}{2! (\leftindex_TY_\cdot^c(x-1) -2)!} \\
	= \theta_0 \leftindex_TY_\cdot^c(x-1) + \theta_0^2 \leftindex_TY_\cdot^c(x-1) (\leftindex_TY_\cdot^c(x-1)-1) \\
	= \theta_0 \leftindex_TY_\cdot^c(x-1) + \theta_0^2 (\leftindex_TY_\cdot^c(x-1))^2 -  \theta_0^2 \leftindex_TY_\cdot^c(x-1) \\
\end{multline*}
Insertion into in (\ref{bedingterExp}) results in:
\begin{equation*}
	\mathrm{E}_{\theta_0}^{\mathbb{T}}\Bigl\{\bigl(  \Delta \leftindex_TM_\cdot^c(x) \bigr)^2 \vert \leftindex_T{\mathcal{F}}_{x-1}^c\Bigr\} 
	=\theta_0(1-\theta_0) \leftindex_TY_i^c(x-1)
\end{equation*}
and hence \eqref{rafael} becomes:
\begin{equation*}
\frac{1}{n \theta_0 (1- \theta_0)}  \sum_{x=1}^\chi \mathrm{E}_{\theta_0}^{\mathbb{T}} \bigl\{ \leftindex_TY_\cdot^c(x-1) \bigr\} 
	 =  \frac{1}{\theta_0 (1-\theta_0)}  \sum_{x=1}^\chi  \mathrm{E}_{\theta_0}^{\mathbb{T}} \bigl\{\leftindex_TY_1^c(x-1) \bigr\}
\end{equation*}
Hence holds $\mathrm{E}_{\theta_0}^{\mathbb{T}} \{ (\leftindex_TV_\cdot^c(\chi))^2\} < \infty$ for all $\chi \in \{1, 2, \ldots\}$ because of Assumption \ref{pararaumeinfach} and hence the predictable variation process $\langle \leftindex_TV_\cdot^c \rangle$ exists.

 \subsection{Explicit representation}

For the first term in \eqref{varproc_base} it is
\begin{multline*}
	\sum_{i=1}^n \mathrm{Var}_{\theta_0}^{\mathbb{T}} \big\{\Delta \leftindex_TV_i^c(x) \vert \leftindex_T{\mathcal{F}}_{x-1}^c \bigr\} 
	= \frac{1}{n \theta_0^2 (1-\theta_0)^2} \sum_{i=1}^n \mathrm{Var}_{\theta_0}^{\mathbb{T}} \big\{\Delta \leftindex_TM_i^c(x) \vert \leftindex_T{\mathcal{F}}_{x-1}^c \bigr\} \\
	=\frac{1}{n \theta_0^2 (1-\theta_0)^2} \sum_{i=1}^n \mathrm{E}_{\theta_0}^{\mathbb{T}} \big\{ \bigl(\Delta \leftindex_TM_i^c(x) \bigr)^2 \vert \leftindex_T{\mathcal{F}}_{x-1}^c \bigr\} \\
= (n \theta_0^2 (1-\theta_0)^2)^{-1}\sum_{i=1}^n \biggl[\theta_0 \leftindex_TY_i^c(x-1) 
 - 2 \theta_0^2 \leftindex_TY_i^c(x-1) + \theta_0^2 \leftindex_TY_i^c(x-1) \biggr] \\
= \frac{1}{n\theta_0 (1-\theta_0)} \sum_{i=1}^n \leftindex_TY_i^c(x-1),
\end{multline*}
using for the third equality that $\mathrm{E}_{\theta_0}^{\mathbb{T}} \{ \Delta \leftindex_TN_i^c(x) \vert \leftindex_T{\mathcal{F}}_{x-1}^c \}=\theta_0 \leftindex_T{Y_i^c(x-1)}$ and $\leftindex_TY_i^c(x-1) \in \{0,1\}$.

For the second term of \eqref{varproc_base} holds, due to $\mathrm{E}_{\theta_0}^{\mathbb{T}}\{\Delta \leftindex_TM_i^c(x)\vert \leftindex_T{\mathcal{F}}_{x-1}^c\}=0$, that
\begin{multline*}
2 \sum_{\substack{i_1,i_2=1\\ i_1< i_2}}^n \mathrm{Cov}_{\theta_0}^{\mathbb{T}} \bigl\{ \Delta \leftindex_TV_{i_1}^c(x),  \Delta \leftindex_TV_{i_2}^c(x) \vert \leftindex_T{\mathcal{F}}_{x-1}^c \bigr\} \\
	= 2(n \theta_0^2 (1-\theta_0)^2)^{-1} \sum_{\substack{i_1,i_2=1\\ i_1< i_2}}^n \mathrm{E}_{\theta_0}^{\mathbb{T}} \bigl\{ \Delta \leftindex_TM_{i_1}^c(x) \Delta \leftindex_TM_{i_2}^c(x) \vert \leftindex_T{\mathcal{F}}_{x-1}^c \bigr\}=0,
\end{multline*} 
due to the independence in Assumption \ref{A3:Ind}.
The final argument is that by independence $\frac{1}{n} \sum_{i=1}^n \leftindex_TY_i^c(x-1) \xrightarrow{p}  \mathrm{E}_{\theta_0}\{\leftindex_TY_1^c(x-1)\}$.
\qed

\end{document}